\newtheorem{lem}{Lemma}
\begin{document}
\date{\today}
\author{Nikolaos K. Kollas}
\email{kollas@upatras.gr}
\affiliation{Division of Theoretical and Mathematical Physics, Astronomy and Astrophysics, Department of Physics, University of Patras, 26504,  Patras, Greece}
\author{Kostas Blekos}
\affiliation{Division of Theoretical and Mathematical Physics, Astronomy and Astrophysics, Department of Physics, University of Patras, 26504,  Patras, Greece}
\title{Faithful extraction of quantum coherence}
\begin{abstract}
Coherence, a strictly quantum phenomenon, has found many applications, from quantum information theory and thermodynamics to quantum foundations and biology. When physical constraints are taken into consideration creation of coherence in a system is usually impossible and must therefore be extracted from another system acting as a reservoir. In this article we present two faithful extraction protocols in the sense that the interaction involved between the system and reservoir is strictly coherence preserving. As an example we implement both in the case where the reservoir is a quantum harmonic oscillator in a coherent and a squeezed state respectively, and study the limits of repeatable extraction. For a single extraction onto qubits it is demonstrated that, perhaps surprisingly, one of the protocols manages to outperform a previous method, known as catalytic coherence, which allows the creation of an extra amount of coherence between degenerate energy eigenstates of the combined system. 
\end{abstract}
\maketitle
\section{Introduction}
One of the most striking features of quantum mechanics is the notion of \emph{superposition}, i.e., the idea that a quantum system can exist in different states simultaneously, whether this is an electron passing through both slits of a screen during interference experiments or a cat which is both dead and alive inside a box. Notions like these have been rigorously defined both qualitatively and quantitatively \cite{aberg2006quantifying} and recast into the \emph{resource theory of coherence} \cite{PhysRevLett.113.140401,PhysRevLett.116.120404,RevModPhys.89.041003} and the \emph{resource theory of quantum reference frames and asymmetry} \cite{RevModPhys.79.555,Gour_2008,Marvian2014,PhysRevA.90.062110} with applications ranging from metrology \cite{PhysRevA.94.052324} and biology \cite{Lloyd_2011}, to thermodynamics \cite{PhysRevLett.113.150402,Lostaglio2015,PhysRevLett.115.210403,PhysRevX.5.021001,Narasimhachar2015,Korzekwa_2016} and the theory of entanglement \cite{PhysRevLett.115.020403,PhysRevLett.117.020402}.

The amount of coherence present in a system is a useful resource which enables one to lift restrictions imposed by conservation laws and simulate transformations which would otherwise be impossible. For example conservation of energy forbids the creation of a pure state in a superposition of different energy levels, from a system which starts initially in a state of definite energy. The only way to achieve this transformation is by extracting the desired superposition from another system, which acts as a reservoir, using a \emph{coherence extraction protocol}. 

In \cite{PhysRevLett.113.150402} such a protocol was proposed known as \emph{catalytic coherence} in which coherence can be extracted to a qubit initially prepared in the ground state of its Hamiltonian by interacting with a half-infinite ladder system in a superposition of it's energy eigenstates, through an energy-conserving unitary operation. By construction, this process is repeatable, at the cost of some fixed amount of energy each time, allowing one to extract in principle an arbitrarily large amount of coherence from the reservoir. 

In Sec. III it will be shown that the interaction involved in the above protocol creates an additional amount of coherence between degenerate eigenstates of the combined system's Hamiltonian. Furthermore this amount, is always greater than what is eventually extracted. An additional drawback lies in the fact that the protocol can only be applied to reservoirs with an infinite number of energy levels \cite{PhysRevLett.123.020403,PhysRevLett.123.020404}, (see also \cite{Vaccaro_2018} for additional criticisms regarding correlations between the extracted qubits). 

Motivated by this we focus attention on \emph{faithful} extraction protocols in which the interactions involved are strictly coherence preserving. This guarantees that the coherence gets extracted from the reservoir and is not introduced in some other way, making them suitable for studying degradation effects \cite{Bartlett_2006,Poulin_2007,doi:10.1080/09500340701289254,PhysRevLett.80.2023}. In Sec. IV two examples of such protocols, one able to extract a smaller and the other a larger amount of coherence each time, are given which can be used on reservoirs with a finite as well as an infinite number of energy levels. In Sec. V these are implemented for a reservoir, in a coherent and a squeezed state of the quantum harmonic oscillator respectively. After a short discussion on the limits of repeatability, it will be shown that for the second protocol coherence extraction to qubits is more efficient than what is possible with \cite{PhysRevLett.113.150402}. 

We begin by giving a short introduction to the resource theory of quantum coherence as well as a general description of coherence extraction protocols.
\section{Resource theory of quantum coherence}
As in any \emph{resource theory} (see \cite{RevModPhys.89.041003,RevModPhys.91.025001} for a recent review) the resource theory of quantum coherence is defined by the set of \emph{free} or \emph{incoherent} states $\mathcal{I}$ and the set of \emph{free} or \emph{incoherent} operations $\mathcal{L}$. Let ${A}$ denote a Hermitian observable of interest. We will consider first the situation in which the spectrum of $A$ is non-degenerate. In this case the set of incoherent states is equal to all those density operators which commute with ${A}$ 
    \begin{equation}\label{eq1}
        \mathcal{I}({A}):=\left\{\rho\left|[\rho,{A}]=0\right.\right\}.
    \end{equation}

The set of incoherent operations is now defined as those \emph{completely positive and trace-preserving operations} (CPTP), $\Lambda$, mapping $\mathcal{I}({A})$ to itself
    \begin{equation}\label{eq2}
        \mathcal L({A}):=\left\{\Lambda\in (CPTP)\left|\Lambda(\rho)\in \mathcal{I}({A}),\forall \rho\in\mathcal{I}({A})\right.\right\}.
    \end{equation}
By demanding ${A}$ obey a conservation law, this set can further be restricted to all those operations $\Lambda\in{\mathcal{L}}({A})$ satisfying
    \begin{equation}\label{eq3}
       \tr(A\rho)=\tr( A\Lambda(\rho))\quad\forall\rho.
    \end{equation}
In the following the set of incoherent operations conserving $A$ will be denoted by $\bar{\mathcal{L}}(A)$.

By definition any state $\rho\not\in\mathcal{I}({A})$ is a resource. These states are called \emph{coherent} and their coherence can be quantified by a non-negative real function $C(\cdot)$ on the set of density matrices. Any true measure of quantum coherence must satisfy two important properties, 
\begin{enumerate}
    \item[i)] \emph{faithfulness}: i.e. $C(\rho)=0$ iff $\rho\in\mathcal{I}({A})$ and
    \item[ii)] \emph{monotonicity under incoherent operations}: i.e. $C(\Lambda(\rho))\leq C(\rho)$, $\forall \Lambda\in \mathcal{L}({{A}})$.
\end{enumerate} 
An example of such a measure is given by the \emph{$\ell_1$-norm of coherence} \cite{PhysRevLett.113.140401}
    \begin{equation}\label{eq4}
        C_{\ell_1}(\rho)=\sum_{i\neq j}\abs{\rho_{ij}}
    \end{equation}
where $\rho_{ij}$ are the non diagonal elements of $\rho$ in the eigenbasis of ${A}$.

In the case of a degenerate spectrum, the set of incoherent states and operations as well as those conserving $A$ is given again by  \crefrange{eq1}{eq3}. This time states with coherence between degenerate eigenstates of $A$ belong to $\mathcal{I}(A)$ and \cref{eq4} splits into two parts
\begin{equation}\label{eq5}
    C_{\ell_1}(\rho)=C_{usef}(\rho)+C_{free}(\rho),
\end{equation}
where
\begin{equation}\label{eq6}
    C_{free}(\rho)=C_{\ell_1}(\Pi(\rho))
\end{equation}
is the amount of degenerate coherence which can be created for free by the action of a quantum operation belonging to $\mathcal{L}(A)$ on any completely diagonal state and is stored in the free state
\begin{equation}\label{eq7}
   \Pi(\rho)=\sum_i P_i\rho P_i ,
\end{equation}
where $P_i$ is the projection onto the eigenstates of $A$ with the same eigenvalue $a_i$, and
\begin{equation}\label{eq8}
    C_{usef}(\rho)=C_{\ell_1}(\rho)-C_{\ell_1}(\Pi(\rho))
\end{equation}
is the amount of \emph{useful} coherence between non-degenerate eigenstates stored in the state due to a violation of \cref{eq1}.
\subsection{Coherence extraction protocols}    
Let ${A}_S$ and ${B}_R$ be two Hermitian operators. With the help of a reservoir $R$ in state $\sigma_R\not\in\mathcal{I}({B}_R)$ containing coherence with respect to observable $B_R$ and acting as a reservoir we can simulate a \emph{coherent channel} $\Phi_{\sigma_R}\not\in\mathcal{L}({A}_S)$ acting on Hilbert space $\mathscr{H}_S$.

Specifically suppose $\rho_S\in\mathcal{I}({A}_S)$ is initially incoherent. The desired channel is constructed by applying an incoherent operation $\Lambda\in\mathcal{L}({A}_S+{B}_R)$ on the composite system followed by tracing out $R$
    \begin{equation}\label{eq9}
        \Phi_{\sigma_R}(\rho_S)=\tr_R{\Lambda(\rho_S\otimes\sigma_R)}.
    \end{equation}
Similarly we can also define the induced quantum channel $\Psi_{\rho_S}\in\mathcal{L}({B}_R)$, acting on $\mathscr{H}_R$ by
     \begin{equation}\label{eq10}
        \Psi_{\rho_S}(\sigma_R)=\tr_S{\Lambda(\rho_S\otimes\sigma_R)}.
    \end{equation}
Since $\Phi_{\sigma_R}\not\in\mathcal{L}({A}_S)$ and $\Psi_{\rho_S}\in\mathcal{L}({B}_R)$, it follows that for any measure $C(\cdot)$
    \begin{equation}\label{eq11}
        C\left(\Phi_{\sigma_R}(\rho_S)\right)\geq 0
    \end{equation}
and
    \begin{equation}\label{eq12}
        C\left(\Psi_{\rho_S}(\sigma_R)\right)\leq C(\sigma_R).
    \end{equation}
As a result coherence has been extracted from the reservoir and stored in system $S$. The protocol associated with \cref{eq9,eq10} is called a \emph{coherence extraction protocol}.

The maximum possible amount of extractable coherence is known as the \emph{cohering power} of the channel and is given by \cite{PhysRevA.92.032331,BU20171670}
    \begin{equation}\label{eq13}
        \mathcal{C}(\Phi)=\max_{\rho_S\in\mathcal{I}({A}_S)}C(\Phi(\rho_S)).
    \end{equation}
Equation (\ref{eq13}) provides a measure of the efficiency of the protocol.
\section{Catalytic extraction protocol}
The observables of interest in this case are the Hamiltonians  ${H}_S=\epsilon_0\ketbra{1}$ and ${H}_R=\epsilon_0\sum_{n=0}^{\infty}n\ketbra{n}$ of a qubit and the reservoir. Note that the reservoir is a system with a fixed energy difference between consecutive levels, equal to $\epsilon_0$, which matches the excited energy of the qubit.

The protocol consists of two stages \cite{PhysRevLett.113.150402}. The first stage shifts the reservoir up one level
\begin{equation}\label{eq14}
    \mathfrak{D}(\sigma_R)=\Delta\sigma_R\Delta^{\dagger}
\end{equation}
where $\Delta=\sum_{n=0}^{\infty}\ketbra{n+1}{n}$ is the \emph{shift operator}. Since $\Delta^{\dagger}\Delta=I$, it follows that $\mathfrak{D}$ is a trace preserving quantum operation. On the other hand, this step requires an amount of energy equal to $\epsilon_0$ to be consumed in the process, so $\mathfrak{D}$ is not energy conserving. This does not affect the discussion however since, as we shall see in Sec. IV, it can always be extended to an energy conserving unitary interaction between $\sigma_R$ and an additional qubit in it's excited state. This additional qubit is no longer needed for the rest of the protocol so we can safely ignore it's existence.

The second stage of the protocol consists of the following energy conserving unitary interaction between the two systems
\begin{equation}\label{eq15}
    V_+(U)=\left(\begin{array}{cc}
         \ketbra{0}+U_{00}\Delta\Delta^\dagger&U_{01}\Delta  \\
         U_{10}\Delta^\dagger&U_{11}I 
    \end{array}\right)
\end{equation}
where each block acts on $\mathscr{H}_R$ and $U_{ij}$ are the elements of some unitary operator $U$ acting on $\mathscr{H}_S$. 

Suppose that initially $\rho_S=\ketbra{0}$. From \cref{eq9,eq10} we find that
    \begin{equation}\label{eq16}
        \Phi_{\sigma_R}(\ket{0})=
        \left(\begin{array}{cc}
             \abs{U_{00}}^2 &U_{00}U^*_{10}\tr(\Delta\sigma_R)  \\
 U^*_{00}U_{10}\tr(\Delta^{\dagger}\sigma_R)& \abs{U_{10}}^2
        \end{array}\right)
    \end{equation}
and
    \begin{equation}\label{eq17}
        \Psi_{\ket{0}}(\sigma_R)=\abs{U_{00}}^2\Delta\sigma_R\Delta^{\dagger}+\abs{U_{10}}^2\sigma_R.
    \end{equation}
A key element of the protocol lies in the fact that $\tr(\Delta\Psi_{\ket{0}}(\sigma_R))=\tr(\Delta\sigma_R)$. This means that the process can be repeated with $\Psi_{\ket{0}}(\sigma_R)$ acting as the new reservoir. Provided a sufficient amount of energy, we can retrieve a sequence of qubits all in the same state $\Phi_{\sigma_R}(\ket{0})$. It thus appears that it is possible to extract an arbitrarily large amount of coherence from the reservoir which acts as some kind of catalyst. This phenomenon 
is also known as the \emph{coherence embezzling phenomenon} \cite{2019arXiv190609067C}.

Let us now compute the amount of extracted coherence stored in the qubit. Using the $\ell_1$-norm as a measure we find from \cref{eq16}
    \begin{equation}\label{eq18}
    {C}_{\ell_1}(\Phi_{\sigma_R}(\ket{0}))=2\abs{U_{00}}\abs{U_{10}}\abs{\tr(\Delta\sigma_R)}.
    \end{equation}
On the other hand for any $n\geq 1$
\begin{equation}\label{eq19}
    V_+(U)(\ket{0}\otimes\ket{n})=U_{00}\ket{0}\otimes\ket{n}+U_{10}\ket{1}\otimes\ket{n-1}.
\end{equation}
With the help of \cref{eq6} we see that the interaction is actually responsible for creating an amount of
\begin{equation}
    2\abs{U_{00}}\abs{U_{10}}
\end{equation}
units of free coherence between degenerate eigenstates of the combined system. Equation (\ref{eq18}) depends on this extra amount and $\abs{\tr(\Delta\sigma_R)}$ which is  a measure of the coherence originally present in the reservoir. For this reason catalytic coherence cannot be considered as a true extraction protocol and is not suited for studying degradation effects in the reservoir \cite{Bartlett_2006,Poulin_2007,doi:10.1080/09500340701289254,PhysRevLett.80.2023}.

Moreover since $\abs{\tr(\Delta\sigma_R)}\leq1$, it immediately follows that
    \begin{equation}\label{eq20}
        {C}_{\ell_1}(\Phi_{\sigma_R}(\ket{0}))\leq  2\abs{U_{00}}\abs{U_{10}}
    \end{equation}
and therefore the amount of extracted coherence is always less than that created in the combined system by the interaction. 
\section{Faithful extraction protocols}
We will now present two extraction protocols which are faithful, in the sense that the interactions involved between the two systems have zero cohering power and are thus incapable of creating any extra amounts of coherence on the combined system. For simplicity we will always assume that the systems to which coherence is stored are qubits. The two protocols, a weak version which is able to extract only a small amount of coherence each time and the other, a stronger version, able to extract a larger amount, are distinguished by the fact that in the former only a single qubit is needed each time, while in the latter, the number of qubits necessary increases exponentially with respect to the number of extractions. It is worth mentioning that unlike the case of catalytic extraction, both protocols can be applied to reservoirs with a finite as well as an infinite number of energy levels.
\subsection{Weak faithful extraction}
\begin{figure}
    \includegraphics[width=\columnwidth]{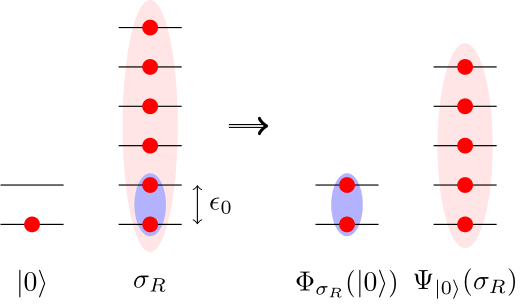}
    \caption{Weak faithful extraction of quantum coherence from a 6-level energy ladder to a qubit initially in the ground state.  The protocol extracts the coherence between the ground and first excited state of the reservoir and stores this amount into the qubit. After the interaction the reservoir has lost a quantum of energy. The process can be repeated at most 5 times.}
    \label{fig1}
\end{figure}
Let's consider the general case in which we wish to extract coherence from a finite energy ladder reservoir with $N+1$ levels and Hamiltonian $ H_R=\epsilon_0\sum_{n=0}^{N}n\ketbra{n}$ to a qubit with the same Hamiltonian as before. The interaction between the reservoir and the qubit in this case is given by
\begin{equation}\label{eq21}
    V=\left(\begin{array}{cc}
         \ketbra{0}& \Delta \\
        \Delta^\dagger & \ketbra{N}
    \end{array}\right)
\end{equation}
where $\Delta=\sum_{n=0}^{N-1}\ketbra{n+1}{n}$. From $\Delta\Delta^{\dagger}=I-\ketbra{0}$ and $\Delta^{\dagger}\Delta=I-\ketbra{N}$, it can be checked that $V$ is unitary and also conserves the total energy. Moreover since for any $0\leq n\leq N$, both
\begin{equation}\label{eq22}
    V(\ket{0}\otimes\ket{n})=\delta_{0n}\ket{0}\otimes\ket{0}+(1-\delta_{0n})\ket{1}\otimes\ket{n-1}
\end{equation}
 and 
 \begin{equation}\label{eq23}
    V(\ket{1}\otimes\ket{n})=(1-\delta_{Nn})\ket{0}\otimes\ket{n+1}+\delta_{Nn}\ket{1}\otimes\ket{N}
 \end{equation} 
 are incoherent it follows that $V$ is coherence conserving, ($\mathcal{C}(V)=0$), so $V\in\bar{\mathcal{L}}({H}_S+{H}_R)$.

If initially $\rho_S=\ketbra{0}$ then after the interaction
\begin{equation}\label{eq24}
    \Phi_{\sigma_R}(\ket{0})=\left(\begin{array}{cc}
         \sigma_{00}&\sigma_{01}\\
         \sigma^*_{01}&1-\sigma_{00}
    \end{array}\right),
\end{equation}
where $\sigma_{nn'}=\bra{n}\sigma_R\ket{n'}$, are the matrix elements of $\sigma_R$ and
\begin{equation}\label{eq25}
    \Psi_{\ket{0}}(\sigma_R)=\sigma_{00}\ketbra{0}+\Delta^{\dagger}\sigma_R\Delta.
\end{equation}
From \cref{eq24} it can be seen that the amount of coherence extracted in this case, measured using the $\ell_1$-norm, is equal to that between the ground and excited states of the reservoir which loses a quantum of energy in the process (see \cref{fig1}).

Repeating the procedure we find by induction that the amount of extracted coherence after $m$ applications of the protocol is equal to
\begin{equation}\label{eq26}
    C_{\ell_1}(\rho_S^{(m)})=2\abs{\sigma_{m-1,m}}
\end{equation}
units of coherence, while the state of the reservoir after each extraction is given by
\begin{equation}\label{eq27}
    \sigma_R^{(m)}=\sum_{k=0}^{m-1}\sigma_{kk}\ketbra{0}+(\Delta^{\dagger})^m\sigma_R\Delta^m.
\end{equation}
Since $\Delta^{N+1}=0$, it follows that after $N$ repetitions of the protocol the reservoir is in it's ground state and extraction is no longer possible. 

For a reservoir with an infinite number of energy levels ($N\to\infty$), the interaction is given by \cref{eq15} with $U$ the $\sigma_x$ Pauli matrix
\begin{equation}\label{eq28}
    V_+(\sigma_x)=\left(\begin{array}{cc}
         \ketbra{0}&\Delta \\
         \Delta^\dagger&0 
    \end{array}\right).
\end{equation}
This time extraction is only possible for a qubit initially in it's ground state for which \cref{eq26,eq27} remain the same. If the qubit is excited, then after the interaction it will relax to it's ground state and the new state of the reservoir will be given by \cref{eq14}. Note that $V_+(\sigma_x)$ is simply the unitary interaction that is necessary in order to implement the first step of the catalytic extraction protocol. 
\subsection{Strong faithful extraction}
\begin{figure}
    \includegraphics[width=\columnwidth]{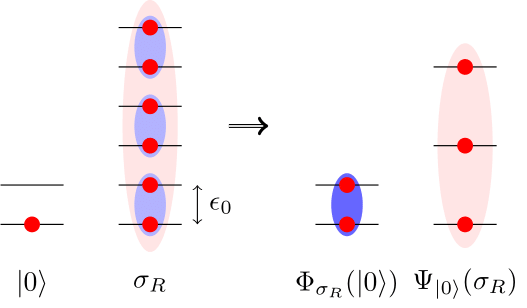}
\caption{Strong faithful extraction of quantum coherence from a 6-level energy ladder to a qubit initially in the ground state. The protocol extracts the coherence between disjoint pairs of consecutive energy levels of the reservoir and stores this amount into the qubit. After extraction any remaining coherence between energy levels with distance $\epsilon_0$ in the reservoir has been destroyed. The process can be repeated to extract coherence from energy levels with distance $2\epsilon_0$ onto a pair of qubits. The protocol cannot be repeated more than 3 times.}
\label{fig2}
\end{figure}
Consider now the following interaction
\begin{equation}\label{eq29}
    V=\left(\begin{array}{cc}
        P_2 & \Delta P_2 \\
        P_2\Delta^\dagger &\Delta P_2\Delta^\dagger 
    \end{array}\right),
\end{equation}
where 
\begin{equation}\label{eq30}
    P_2=\sum_{n=0}^{N/2-1}\ketbra{2n}
\end{equation} 
is the projection onto the subspace spanned by the even eigenstates of the reservoir's Hamiltonian, where we have also tacitly assumed that the number of energy levels is also even. This leads to no loss of generality, since a reservoir with an odd number of energy levels can always be thought of as being part of some larger system. Making use of the fact that $\Delta P_2\Delta^{\dagger}=I-P_2$ and $P_2\Delta P_2=0$ it can be shown that once again $V$ is unitary and energy conserving, $[ H_{tot},V]=0$ where $H_{tot}=H_S+H_R$ is the total Hamiltonian of the combined system. 

Since for $i,j=0,1$ and any $0\leq n\leq N/2-1$
\begin{equation}\label{eq31}
    V(\ket{i}\otimes\ket{2n+j})=\ket{j}\otimes\ket{2n+i},
\end{equation}
it follows that $\mathcal{C}(V)=0$ and $V\in\bar{\mathcal{L}}({H}_S+{H}_R)$, so the protocol is faithful.

Suppose that $\rho_S=\ketbra{0}$, then after the interaction
    \begin{equation}\label{eq32}
        \Phi_{\sigma_R}(\ket{0})=\left(\begin{array}{cc}
             \tr(P_2\sigma_R)&\tr(\Delta P_2\sigma_R)  \\
             \tr^*(\Delta P_2\sigma_R)&1-\tr(P_2\sigma_R)
        \end{array}\right),
    \end{equation}
\begin{equation}\label{eq33}
    \Psi_{\ket{0}}(\sigma_R)=P_2\sigma_RP_2+P_2\Delta^{\dagger}\sigma_R\Delta P_2,
\end{equation}
and the amount of extracted coherence is equal to
\begin{equation}\label{eq34}
    C_{\ell_1}(\Phi_{\sigma_R(\ket{0})})=2\abs{\tr(\Delta P_2\sigma_R)}.
\end{equation}
Expanding $\tr(\Delta P_2\sigma_R)=\sum_n{\sigma_{2n,2n+1}}$, it can be seen that the protocol essentially extracts the coherence between disjoint pairs of consecutive energy levels of the reservoir and stores this amount into the qubit (see \cref{fig2}). Comparing this case with that discussed previously, it is expected that for a single extraction from the same reservoir, this protocol will generally outperform the weaker one (for strong faithful extraction to systems with more energy levels see Supplementary). 

From \cref{eq33} we observe that because the reservoir has been projected onto the subspace of even energy levels, any remaining coherence between levels with energy difference equal to $\epsilon_0$ has now been destroyed. In order to extract coherence a second time we now need a pair of qubits both in their ground state. Treating this pair as an effective two level system with excited energy $2\epsilon_0$ we can substitute $P_2\to P_4=\sum_n\ketbra{4n}$, $\Delta\to\Delta^2P_2=\sum_n\ketbra{2n+2}{2n}$ and $\sigma_R\to\Psi_{\ket{0}}(\sigma_R)$ in \cref{eq33,eq34} to calculate the newly extracted amount. Iterating this process it can be shown by induction that after $m$ extractions an amount of 
\begin{equation}\label{eq35}
    C_{\ell_1}(\rho_S^{(m)})=2\abs{\tr(\Delta^{2^{m-1}}P_2^{(m)}\sigma_R)}.
\end{equation}
units of coherence has been stored onto a system of $2^{m-1}$ qubits with combined Hamiltonian
\begin{equation}\label{eq36}
    {H}_S^{(m)}=\bigoplus_{i=1}^{2^{m-1}}H_S
\end{equation}
where 
\begin{equation}\label{eq37}
    P_2^{(m)}=\sum_{k=0}^{2^{m-1}-1}\Delta^kP_{2^m}(\Delta^{\dagger})^k
\end{equation}
and $P_{2^m}=\sum_{n}\ketbra{2^mn}$ where the summation is taken over those integer values $n\leq N/2^m-1$. 

In a similar fashion the state of the reservoir after each extraction will be equal to
\begin{equation}\label{eq39}
    \sigma_R^{(m)}=P_{2^m}\left(\sum_{k=0}^{2^m-1}(\Delta^\dagger)^k\sigma_R\Delta^k\right)P_{2^m}.
\end{equation}
Since the total energy of all extracted qubits cannot exceed that of the highest occupied energy of the reservoir ($(N-1)\epsilon_0$), the protocol can be repeated at most $\lfloor\log_2N\rfloor$ times.
\section{Extraction from a quantum harmonic oscillator}
We will now implement both protocols in the case where the reservoir is a quantum harmonic oscillator in the coherent state \cite{PhysRevLett.10.84}
    \begin{equation}\label{eq40}
        \ket{a}=e^{-\frac{\abs{a}^2}{2}}\sum_{n=0}^{\infty}\frac{(e^{i\phi}\abs{a})^n}{\sqrt{n!}}\ket{n},
    \end{equation}
as well as the single mode squeezed vacuum state \cite{SCHNABEL20171}
\begin{equation}\label{eq41}
    \ket{SMSV}=\frac{1}{\sqrt{\cosh{r}}}\sum_{n=0}^\infty(-e^{i\phi}\tanh{r})^n\frac{\sqrt{(2n)!}}{2^{n}n!}\ket{2n}
    \end{equation}
where $\abs{a}$, $r\geq0$ are coherence parameters and $\phi$ a phase.
\subsection{Weak faithful extraction}
\begin{figure}
\subfloat[Coherent reservoir]{\includegraphics[width=\columnwidth,trim=0 0 0 0.9cm,clip]{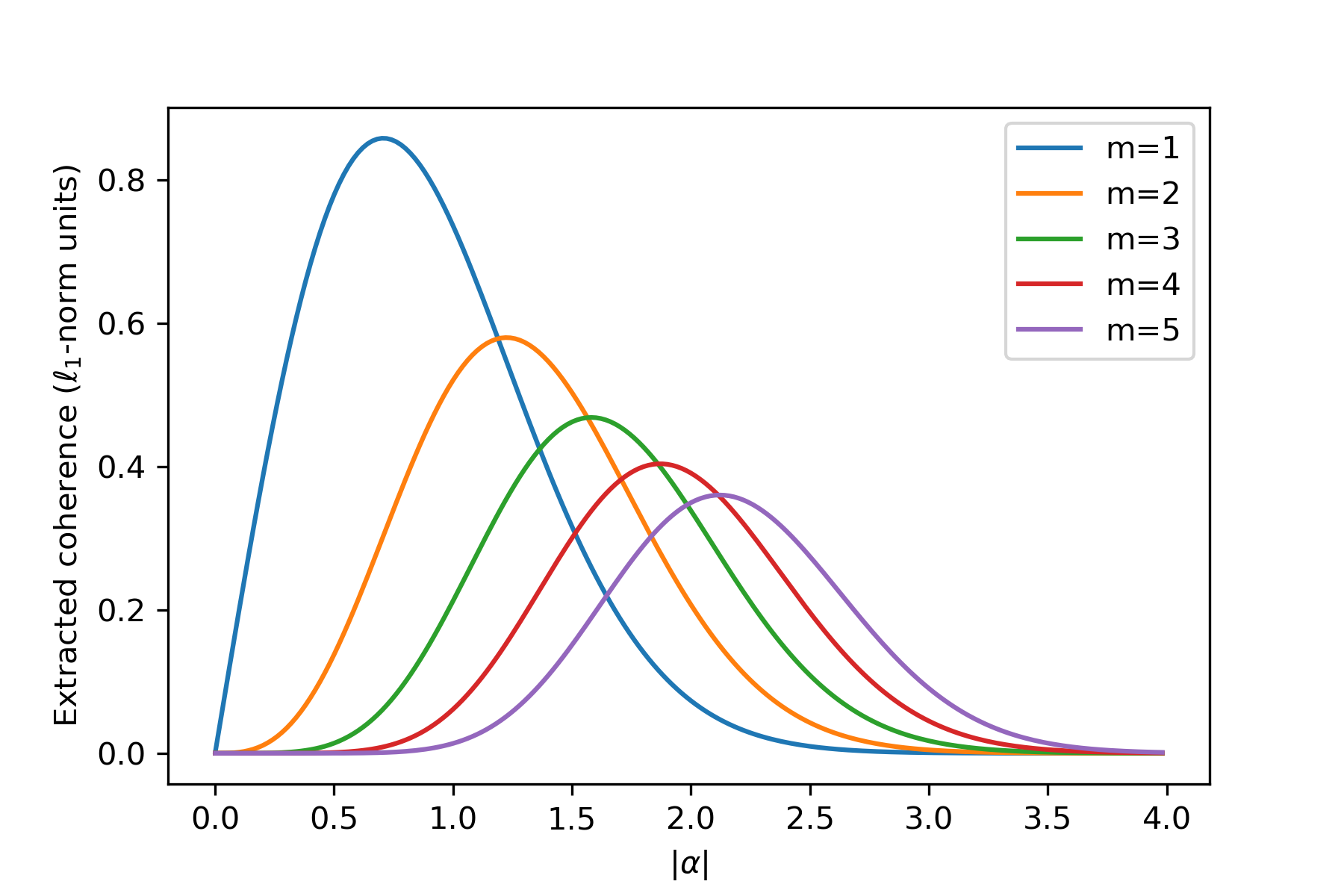}}\\
\subfloat[Squeezed reservoir]{\includegraphics[width=\columnwidth,trim=0 0 0 0.9cm,clip]{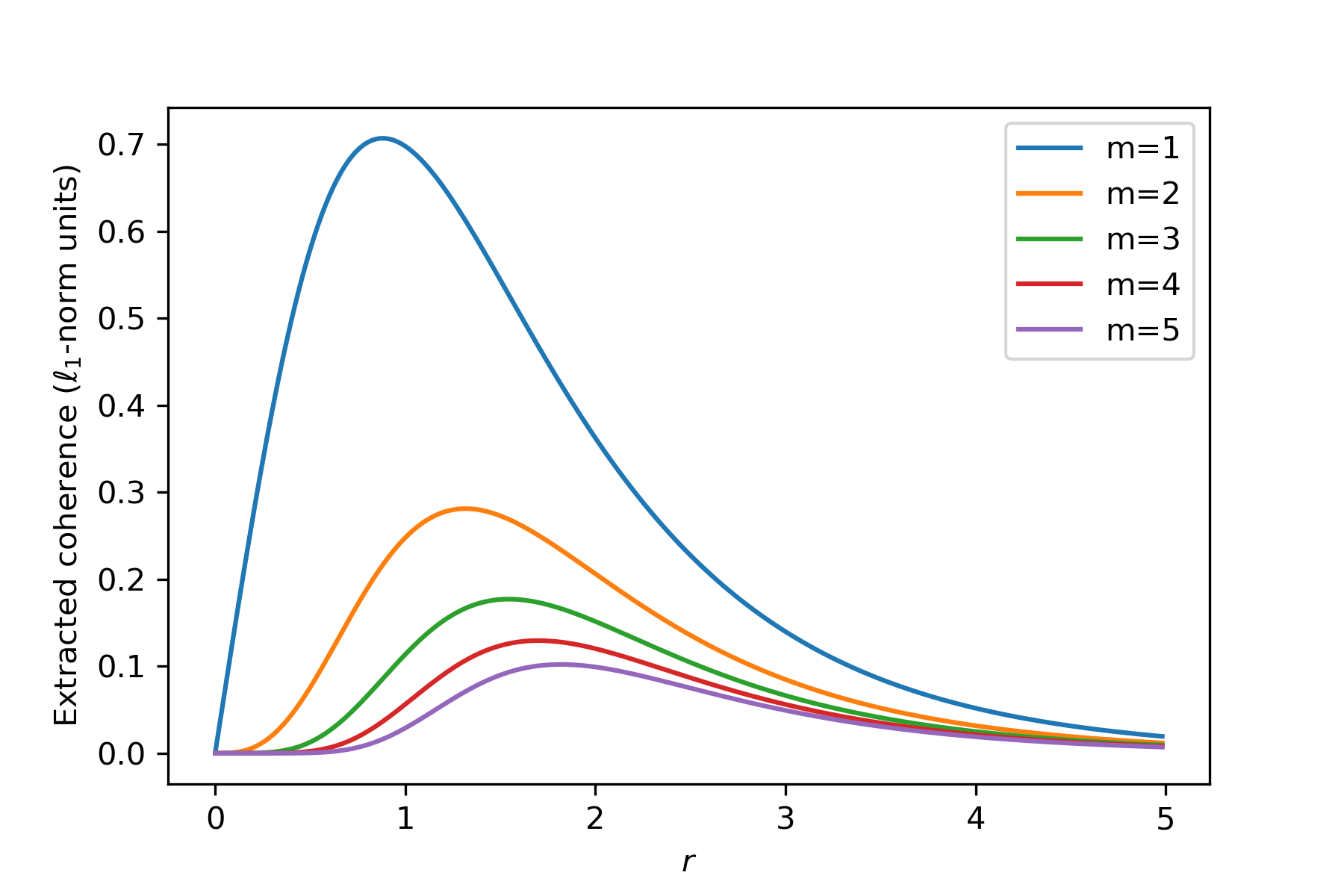}}
\caption{Amount of extracted coherence measured in $\ell_1-norm$ units from a quantum harmonic oscillator after (from top to bottom) $m=1,2,3,4,5$ applications of the weak extraction protocol.}\label{fig3}
\end{figure}
With the help of \cref{eq26} we find that the amount of extracted coherence after $m$ applications of the protocol is equal to
\begin{figure*}
\begin{minipage}{\columnwidth}
\subfloat{\includegraphics[width=\columnwidth,trim=0 0 0 0.9cm,clip]{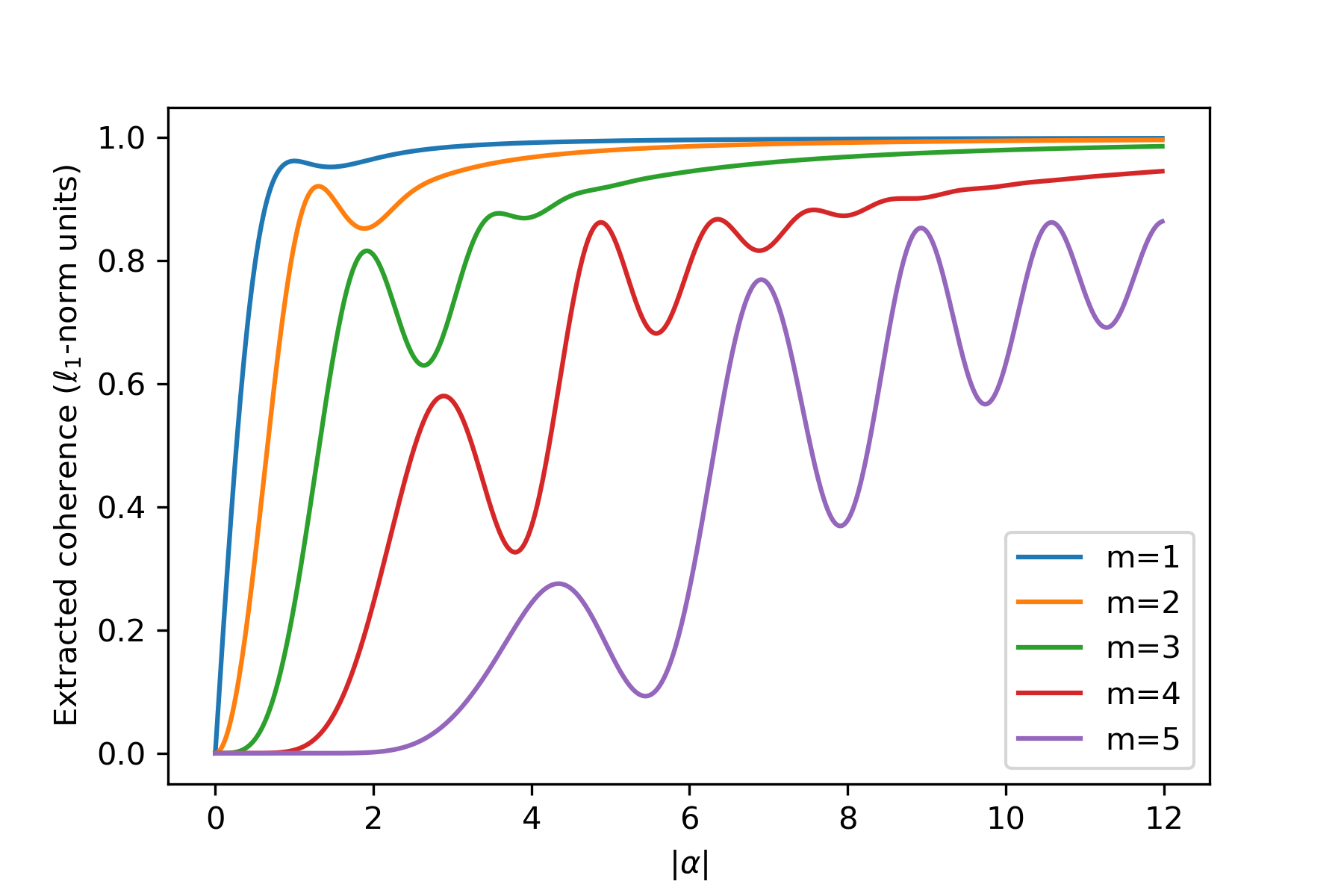}}\hfill\\
\subfloat{\includegraphics[width=\columnwidth,trim=0 0 0 0.9cm,clip]{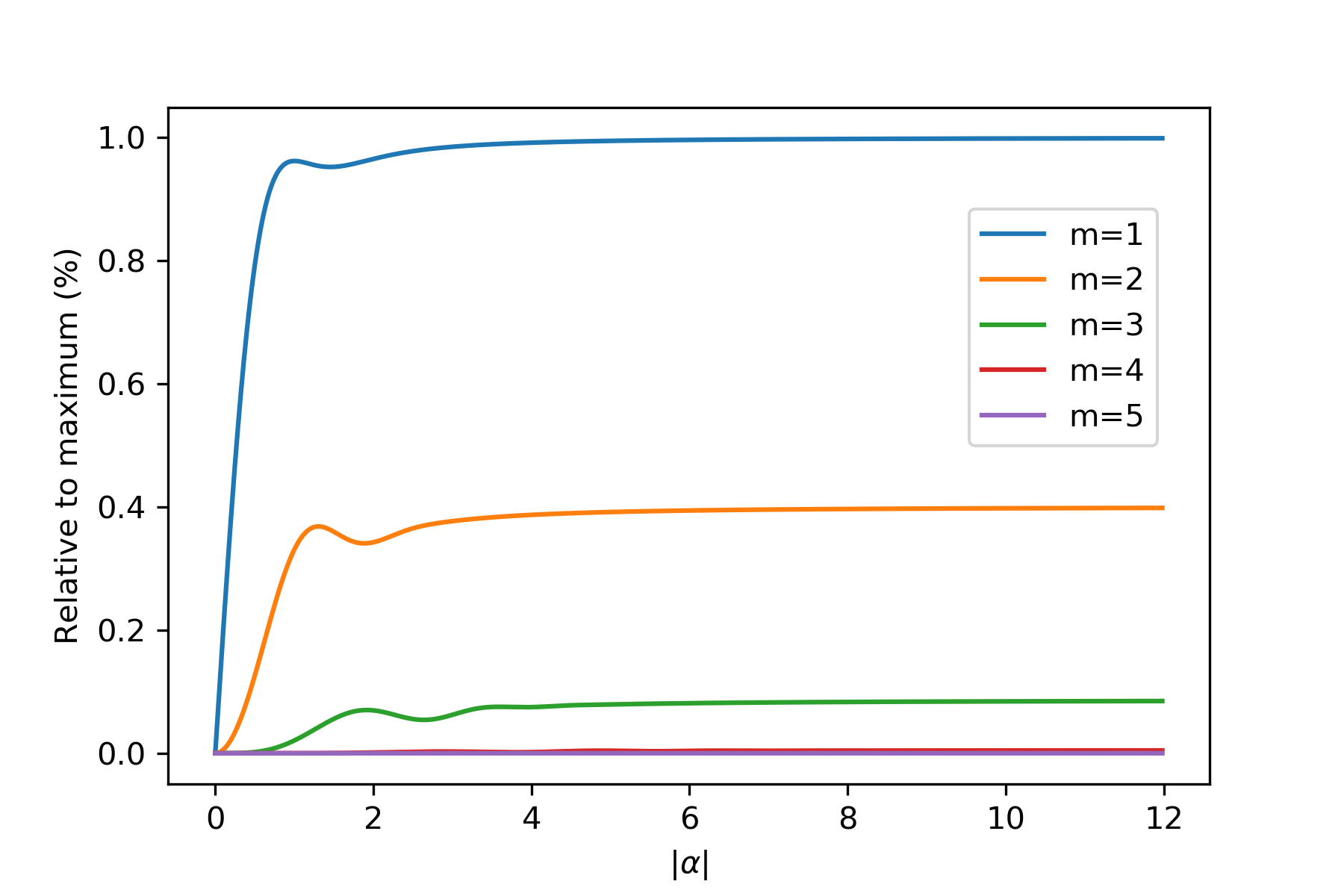}}\\
(a) Coherent reservoir
\end{minipage}
\begin{minipage}{\columnwidth}
\subfloat{\includegraphics[width=\columnwidth,trim=0 0 0 0.9cm,clip]{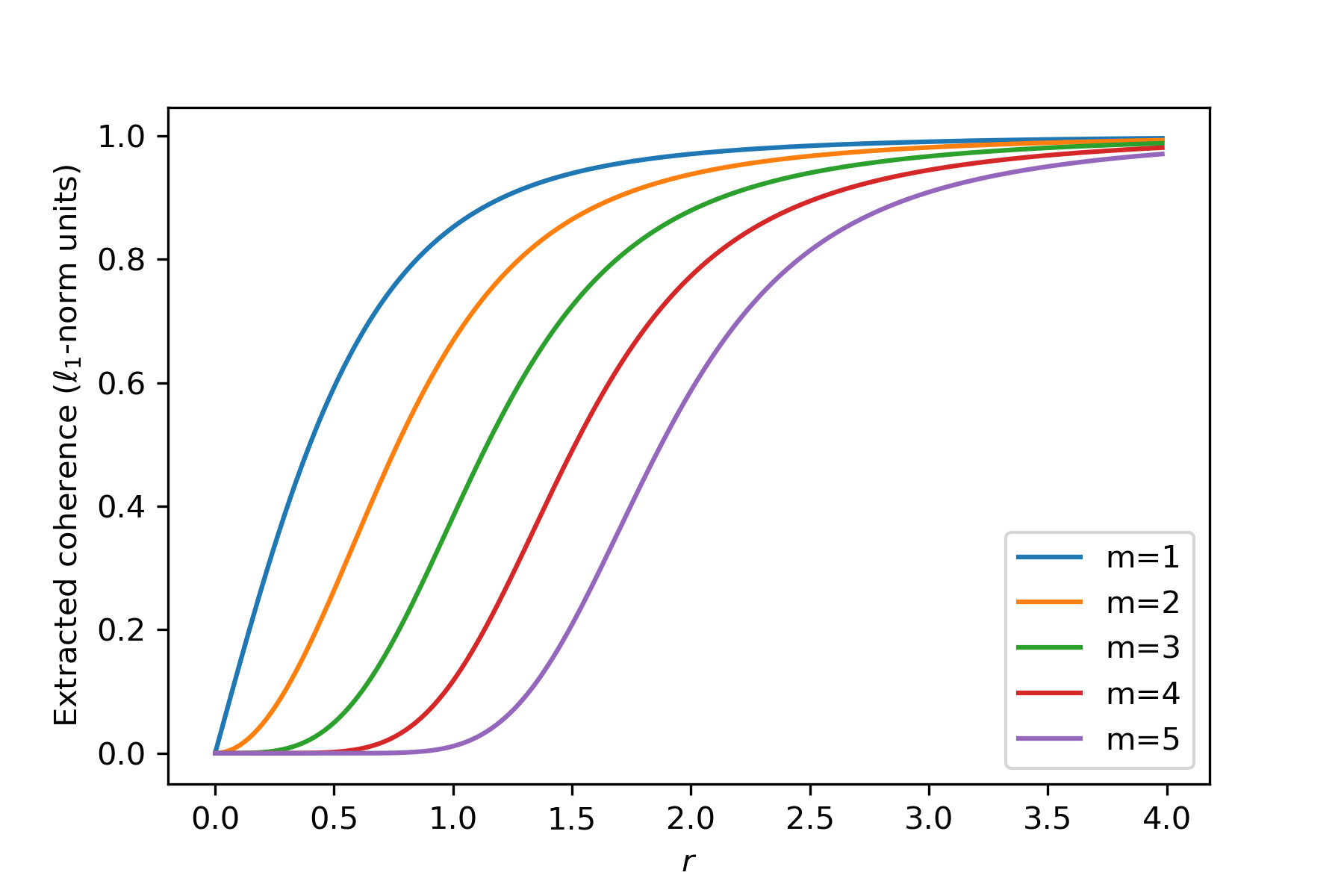}}\\
\subfloat{\includegraphics[width=\columnwidth,trim=0 0 0 0.9cm,clip]{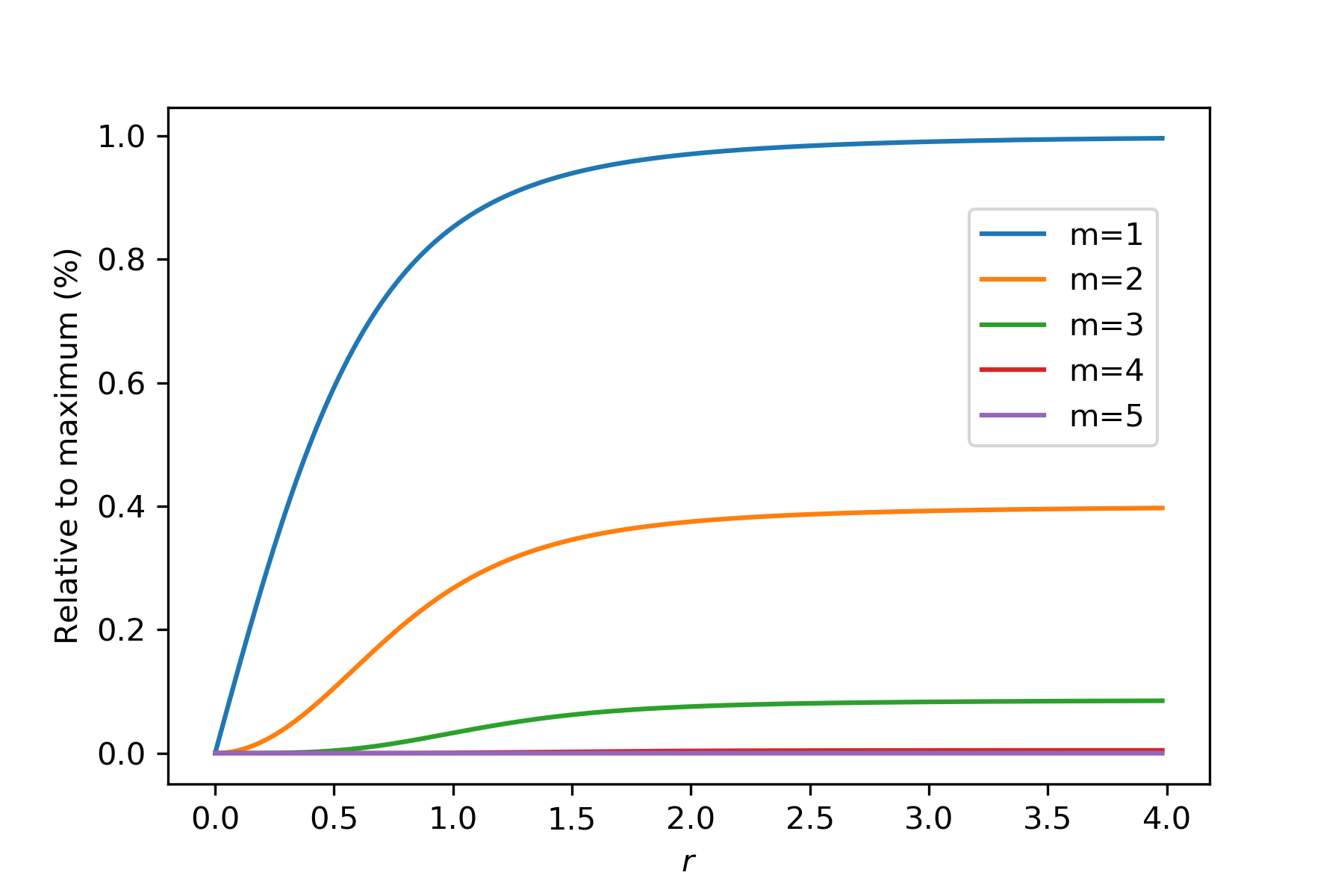}}\\
(b) Squeezed reservoir
\end{minipage}
\caption{Amount of extracted coherence from a harmonic oscillator after (from top to bottom) $m=1,2,3,4,5$ applications of the strong extraction protocol. (Upper) extracted amount in $\ell_1-norm$ units. (Bottom) relative to maximum useful amount for a system of $2^{m-1}$ qubits.}
\label{fig4}
\end{figure*}
\begin{equation}\label{eq42}
    C_{\ell_1}(\rho_S^{(m)})=2e^{-\abs{a}^2}\frac{\abs{a}^{2m-1}}{\sqrt{(m-1)!(m)!}},
\end{equation}
for a reservoir in the coherent state and
\begin{equation}\label{eq43}
    C_{\ell_1}(\rho_S^{(m)})=2\frac{(\tanh{r})^{2m-1}}{\cosh{r}}\frac{\sqrt{(2m-2)!(2m)!}}{2^{2m-1}(m-1)!m!}
\end{equation}
for the squeezed state. In both cases the amount of extracted coherence is independent of the phase. 

In \cref{fig3} we present the extracted amount as a function of the coherence parameter for different values of $m$. We observe that for the coherent reservoir this amount fluctuates depending the value of $\abs{a}$ while in the case of the squeezed reservoir the amount of extracted coherence decreases with each extraction. In both cases only a finite amount can be extracted in total since the maximum possible value decreases with $m$ as can be seen directly from \cref{eq42,eq43}.
\subsection{Strong faithfull extraction}
This time with the help of \cref{eq35} it can be shown that an amount of
\begin{equation}\label{eq44}
    C_{\ell_1}(\rho_S^{(m)})=2\sum_{k=0}^{2^{m-1}-1}F_{2^m;k,k+2^{m-1}}(\abs{a}^2)
\end{equation}
units of coherence gets extracted from the coherent reservoir, while for the squeezed case this amount is equal to
\begin{equation}\label{eq45}
    C_{\ell_1}(\rho_S^{(m)})=\frac{2}{\cosh{r}}\sum_{k=0}^{2^{m-1}-1}G_{2^m;k,k+2^{m-1}}\left((\tanh{r})^2\right)
\end{equation}
where the functions $F$ and $G$ are given by
    \begin{align}
        &F_{d;k,k'}(x)=e^{-x}\sum_{n=0}^\infty \frac{x^{nd+\frac{k+k'}{2}}}{\sqrt{(nd+k)!(nd+k')!}}\label{eq46}\\
        &G_{d;k,k'}(x)=\sum_{n=0}^{\infty}\left(\frac{x}{4}\right)^{nd+\frac{k+k'}{2}}\frac{\sqrt{(2nd+2k)!(2nd+2k')!}}{(nd+k)!(nd+k')!}.\label{eq47}
    \end{align}
Once again as in the weak case the amount of extracted coherence is independent of the phase.

In \cref{fig4} we present the extracted amount as a function of the coherence parameter for different values of $m$. For both states of the reservoir it seems that in the limit of very large parameter values the same amount of coherence gets extracted irrespective of the number of repetitions. This follows from the fact that for $\abs{a}\to\infty$ and $r\to 1$
\begin{equation}\label{eq48}
    \lim_{x\to\infty} F_{d,k,k'}(x)=\lim_{x\to 1}G_{d,k,k'}(x)\sqrt{1-x}=\frac{1}{d}.
\end{equation}
(for a formal proof see Supplementary). 

Because the strong protocol treats the combined system of $2^{m-1}$ qubits needed each time for extraction as an effective two level system, all of the coherence extracted gets stored between the ground and highest energy level of the system. This excludes a very large number of levels that could potentially be used for storing. The amounts given in \cref{eq44,eq45} should therefore be compared to the maximum possible amount of useful coherence which can be stored in the system. This is given by \cref{eq8} for $\rho$ equal to the maximaly coherent pure state of $d$ dimensions
\begin{equation}\label{eq49}
    \ket{\psi_d}=\frac{1}{\sqrt{d}}\sum_{i=0}^{d-1}\ket{i}
\end{equation}
which for a system of $M$ qubits is equal to
\begin{equation}\label{eq50}
    C_{max}=2^M-\frac{(2M)!}{2^M(M!)^2}.
\end{equation}
From \cref{fig4} it can be seen that compared to this amount extraction becomes negligible for both reservoirs after four repetitions.
\section{Discussion}
\begin{table}[]
    \begin{tabular}{c||ccc}
        &Finite&Energy&Qubits\\
         &reservoir&(per repetition)&(per repetition)\\\hline
         Catalytic&No&$\epsilon_0$&1\\
         Weak&Yes&0&1\\
         Strong&Yes&0&$2^{m-1}$
    \end{tabular}
\caption{Comparison between catalytic coherence and faithful extraction protocols.}
\end{table}
Even though the catalytic coherence protocol is useful for implementing any coherent channel on a qubit (by choosing a reservoir with $\tr(\Delta\sigma_R)=1$ it can be shown that $\Phi_{\sigma_R}(\rho)=U\rho U^\dagger$), it cannot be considered as a true extraction protocol since there is always a free amount of coherence that gets injected into the combined system by the interaction, part of which is stored into the extracted system. As a matter of fact it was shown in \cref{eq20} that more coherence is actually injected than what is finally extracted. An additional drawback lies in the fact that it can only be applied to reservoirs with an infinite number of energy levels and also requires expenditure of an amount of energy equal to $\epsilon_0$ each time.

In contrast the two protocols developed in Sec. IV, which require no consumption of energy and can also be applied to any reservoir, are faithful since by construction the interactions involved are incapable of creating additional amounts of coherence. This in turn implies that any amount extracted must have necessarily originated in the reservoir. This fact is evident in \cref{fig3,fig4} where the amount of extracted coherence generally decreases for finite values of the coherence parameter due to degradation effects in the reservoir.

Although the stronger protocol in general is able to extract a larger amount than the weaker version, it was demonstrated that since the number of qubits required each time grows exponentially with each extraction, this amount as compared to the maximum amount of useful coherence that can in principle be stored in the system becomes very small after a few repetitions. It is nonetheless interesting to compare it's cohering power with that of catalytic coherence in the case of a single extraction from the coherent and squeezed reservoirs. Since for the strong protocol the amount of extracted coherence is independent of the initial state of the qubit, it's cohering power is equal to \cref{eq34}. In \cref{fig5} we compare this with the cohering power for the catalytic case which is given by $\abs{tr(\Delta\sigma_R)}$. It can be seen that for both reservoirs the strong protocol actually outperforms catalysis for any value of the coherence parameter. This is perhaps surprising considering the fact that in the former case no extra coherence has been introduced during extraction.
\begin{figure}
\subfloat[Coherent reservoir]{\includegraphics[width=\columnwidth,trim={0 0 0 0.8cm},clip]{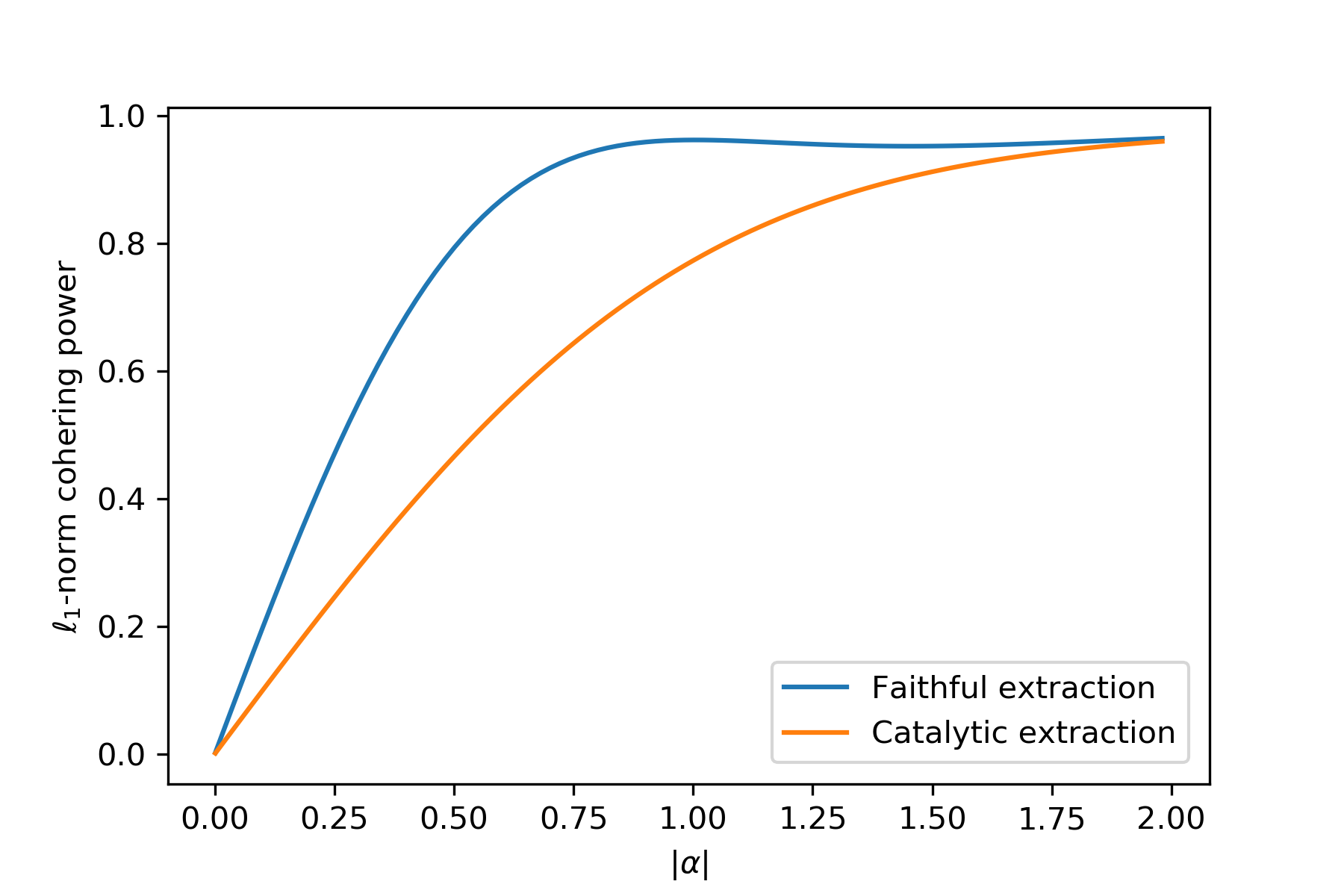}}\\
\subfloat[Squeezed reservoir]{\includegraphics[width=\columnwidth,trim={0 0 0 0.8cm},clip]{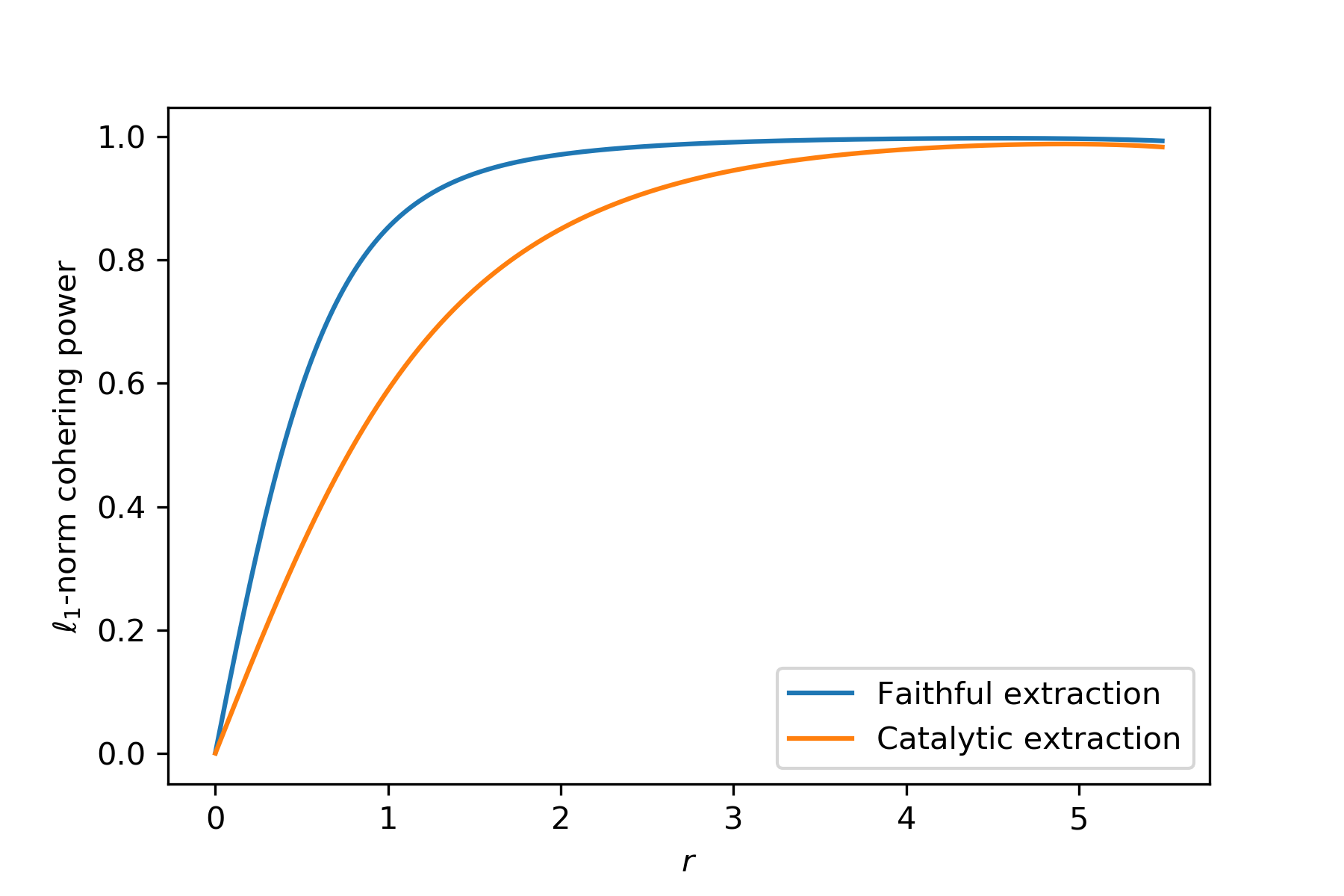}}
\caption{Cohering power of strong (upper curve) and catalytic (lower curve) extraction protocols for a reservoir in the coherent and vacuum squeezed states of a harmonic oscillator.}\label{fig5}
\end{figure}

As was also pointed out in \cite{PhysRevLett.113.150402}, the interaction given by \cref{eq28} for the weak protocol resembles closely that between a qubit and a single mode of the electromagnetic field given by the Jaynes-Cummings Hamiltonian \cite{1443594,doi:10.1080/09500349314551321}
\begin{equation}
    H=g\left(\begin{array}{cc}
         0& a^{\dagger} \\
         a& 0 
    \end{array}\right)
\end{equation}
where $a$, and $a^\dagger$ are the annihilation and creation operators of the field and $g$ is a coupling constant (for use of the Jaynes-Cummings interaction in coherence and catalysis see \cite{10.1088/1367-2630/ab7607}). It is interesting to note that the same kind of interaction also features in entanglement harvesting protocols \cite{PhysRevD.92.064042,PhysRevD.96.025020,PhysRevD.98.085007,PhysRevD.97.125002}. This raises the possibility of extraction of coherence from the vacuum state of a quantum field.

An open question is whether other faithful extraction protocols exist which could outperform the ones presented here. How to modify the protocols for reservoirs with energy levels of unequal distance, e.g. an atom, would also be of interest.
\section*{acknowledgments}
The authors would like to thank C. Anastopoulos and R. N. Morty for helpful discussions in preparation of this manuscript. N.K.K. Acknowledges support by Grant No. E611 from the Research Committee of the University of Patras via the K. Karatheodoris program.
\vfill
\bibliography{extraction}

\begin{thebibliography}{39}%
\makeatletter
\providecommand \@ifxundefined [1]{%
 \@ifx{#1\undefined}
}%
\providecommand \@ifnum [1]{%
 \ifnum #1\expandafter \@firstoftwo
 \else \expandafter \@secondoftwo
 \fi
}%
\providecommand \@ifx [1]{%
 \ifx #1\expandafter \@firstoftwo
 \else \expandafter \@secondoftwo
 \fi
}%
\providecommand \natexlab [1]{#1}%
\providecommand \enquote  [1]{``#1''}%
\providecommand \bibnamefont  [1]{#1}%
\providecommand \bibfnamefont [1]{#1}%
\providecommand \citenamefont [1]{#1}%
\providecommand \href@noop [0]{\@secondoftwo}%
\providecommand \href [0]{\begingroup \@sanitize@url \@href}%
\providecommand \@href[1]{\@@startlink{#1}\@@href}%
\providecommand \@@href[1]{\endgroup#1\@@endlink}%
\providecommand \@sanitize@url [0]{\catcode `\\12\catcode `\$12\catcode
  `\&12\catcode `\#12\catcode `\^12\catcode `\_12\catcode `\%12\relax}%
\providecommand \@@startlink[1]{}%
\providecommand \@@endlink[0]{}%
\providecommand \url  [0]{\begingroup\@sanitize@url \@url }%
\providecommand \@url [1]{\endgroup\@href {#1}{\urlprefix }}%
\providecommand \urlprefix  [0]{URL }%
\providecommand \Eprint [0]{\href }%
\providecommand \doibase [0]{http://dx.doi.org/}%
\providecommand \selectlanguage [0]{\@gobble}%
\providecommand \bibinfo  [0]{\@secondoftwo}%
\providecommand \bibfield  [0]{\@secondoftwo}%
\providecommand \translation [1]{[#1]}%
\providecommand \BibitemOpen [0]{}%
\providecommand \bibitemStop [0]{}%
\providecommand \bibitemNoStop [0]{.\EOS\space}%
\providecommand \EOS [0]{\spacefactor3000\relax}%
\providecommand \BibitemShut  [1]{\csname bibitem#1\endcsname}%
\let\auto@bib@innerbib\@empty
\bibitem [{\citenamefont {Aberg}(2006)}]{aberg2006quantifying}%
  \BibitemOpen
  \bibfield  {author} {\bibinfo {author} {\bibfnamefont {J.}~\bibnamefont
  {Aberg}},\ }\href@noop {} {\enquote {\bibinfo {title} {Quantifying
  superposition},}\ } (\bibinfo {year} {2006}),\ \Eprint
  {http://arxiv.org/abs/quant-ph/0612146} {arXiv:quant-ph/0612146 [quant-ph]}
  \BibitemShut {NoStop}%
\bibitem [{\citenamefont {Baumgratz}\ \emph {et~al.}(2014)\citenamefont
  {Baumgratz}, \citenamefont {Cramer},\ and\ \citenamefont
  {Plenio}}]{PhysRevLett.113.140401}%
  \BibitemOpen
  \bibfield  {author} {\bibinfo {author} {\bibfnamefont {T.}~\bibnamefont
  {Baumgratz}}, \bibinfo {author} {\bibfnamefont {M.}~\bibnamefont {Cramer}}, \
  and\ \bibinfo {author} {\bibfnamefont {M.~B.}\ \bibnamefont {Plenio}},\
  }\href {\doibase 10.1103/PhysRevLett.113.140401} {\bibfield  {journal}
  {\bibinfo  {journal} {Phys. Rev. Lett.}\ }\textbf {\bibinfo {volume} {113}},\
  \bibinfo {pages} {140401} (\bibinfo {year} {2014})}\BibitemShut {NoStop}%
\bibitem [{\citenamefont {Winter}\ and\ \citenamefont
  {Yang}(2016)}]{PhysRevLett.116.120404}%
  \BibitemOpen
  \bibfield  {author} {\bibinfo {author} {\bibfnamefont {A.}~\bibnamefont
  {Winter}}\ and\ \bibinfo {author} {\bibfnamefont {D.}~\bibnamefont {Yang}},\
  }\href {\doibase 10.1103/PhysRevLett.116.120404} {\bibfield  {journal}
  {\bibinfo  {journal} {Phys. Rev. Lett.}\ }\textbf {\bibinfo {volume} {116}},\
  \bibinfo {pages} {120404} (\bibinfo {year} {2016})}\BibitemShut {NoStop}%
\bibitem [{\citenamefont {Streltsov}\ \emph {et~al.}(2017)\citenamefont
  {Streltsov}, \citenamefont {Adesso},\ and\ \citenamefont
  {Plenio}}]{RevModPhys.89.041003}%
  \BibitemOpen
  \bibfield  {author} {\bibinfo {author} {\bibfnamefont {A.}~\bibnamefont
  {Streltsov}}, \bibinfo {author} {\bibfnamefont {G.}~\bibnamefont {Adesso}}, \
  and\ \bibinfo {author} {\bibfnamefont {M.~B.}\ \bibnamefont {Plenio}},\
  }\href {\doibase 10.1103/RevModPhys.89.041003} {\bibfield  {journal}
  {\bibinfo  {journal} {Rev. Mod. Phys.}\ }\textbf {\bibinfo {volume} {89}},\
  \bibinfo {pages} {041003} (\bibinfo {year} {2017})}\BibitemShut {NoStop}%
\bibitem [{\citenamefont {Bartlett}\ \emph
  {et~al.}(2007{\natexlab{a}})\citenamefont {Bartlett}, \citenamefont
  {Rudolph},\ and\ \citenamefont {Spekkens}}]{RevModPhys.79.555}%
  \BibitemOpen
  \bibfield  {author} {\bibinfo {author} {\bibfnamefont {S.~D.}\ \bibnamefont
  {Bartlett}}, \bibinfo {author} {\bibfnamefont {T.}~\bibnamefont {Rudolph}}, \
  and\ \bibinfo {author} {\bibfnamefont {R.~W.}\ \bibnamefont {Spekkens}},\
  }\href {\doibase 10.1103/RevModPhys.79.555} {\bibfield  {journal} {\bibinfo
  {journal} {Rev. Mod. Phys.}\ }\textbf {\bibinfo {volume} {79}},\ \bibinfo
  {pages} {555} (\bibinfo {year} {2007}{\natexlab{a}})}\BibitemShut {NoStop}%
\bibitem [{\citenamefont {Gour}\ and\ \citenamefont
  {Spekkens}(2008)}]{Gour_2008}%
  \BibitemOpen
  \bibfield  {author} {\bibinfo {author} {\bibfnamefont {G.}~\bibnamefont
  {Gour}}\ and\ \bibinfo {author} {\bibfnamefont {R.~W.}\ \bibnamefont
  {Spekkens}},\ }\href {\doibase 10.1088/1367-2630/10/3/033023} {\bibfield
  {journal} {\bibinfo  {journal} {New Journal of Physics}\ }\textbf {\bibinfo
  {volume} {10}},\ \bibinfo {pages} {033023} (\bibinfo {year}
  {2008})}\BibitemShut {NoStop}%
\bibitem [{\citenamefont {Marvian}\ and\ \citenamefont
  {Spekkens}(2014{\natexlab{a}})}]{Marvian2014}%
  \BibitemOpen
  \bibfield  {author} {\bibinfo {author} {\bibfnamefont {I.}~\bibnamefont
  {Marvian}}\ and\ \bibinfo {author} {\bibfnamefont {R.~W.}\ \bibnamefont
  {Spekkens}},\ }\href {https://doi.org/10.1038/ncomms4821} {\bibfield
  {journal} {\bibinfo  {journal} {Nature Communications}\ }\textbf {\bibinfo
  {volume} {5}},\ \bibinfo {pages} {3821 EP } (\bibinfo {year}
  {2014}{\natexlab{a}})},\ \bibinfo {note} {article}\BibitemShut {NoStop}%
\bibitem [{\citenamefont {Marvian}\ and\ \citenamefont
  {Spekkens}(2014{\natexlab{b}})}]{PhysRevA.90.062110}%
  \BibitemOpen
  \bibfield  {author} {\bibinfo {author} {\bibfnamefont {I.}~\bibnamefont
  {Marvian}}\ and\ \bibinfo {author} {\bibfnamefont {R.~W.}\ \bibnamefont
  {Spekkens}},\ }\href {\doibase 10.1103/PhysRevA.90.062110} {\bibfield
  {journal} {\bibinfo  {journal} {Phys. Rev. A}\ }\textbf {\bibinfo {volume}
  {90}},\ \bibinfo {pages} {062110} (\bibinfo {year}
  {2014}{\natexlab{b}})}\BibitemShut {NoStop}%
\bibitem [{\citenamefont {Marvian}\ and\ \citenamefont
  {Spekkens}(2016)}]{PhysRevA.94.052324}%
  \BibitemOpen
  \bibfield  {author} {\bibinfo {author} {\bibfnamefont {I.}~\bibnamefont
  {Marvian}}\ and\ \bibinfo {author} {\bibfnamefont {R.~W.}\ \bibnamefont
  {Spekkens}},\ }\href {\doibase 10.1103/PhysRevA.94.052324} {\bibfield
  {journal} {\bibinfo  {journal} {Phys. Rev. A}\ }\textbf {\bibinfo {volume}
  {94}},\ \bibinfo {pages} {052324} (\bibinfo {year} {2016})}\BibitemShut
  {NoStop}%
\bibitem [{\citenamefont {Lloyd}(2011)}]{Lloyd_2011}%
  \BibitemOpen
  \bibfield  {author} {\bibinfo {author} {\bibfnamefont {S.}~\bibnamefont
  {Lloyd}},\ }\href {\doibase 10.1088/1742-6596/302/1/012037} {\bibfield
  {journal} {\bibinfo  {journal} {Journal of Physics: Conference Series}\
  }\textbf {\bibinfo {volume} {302}},\ \bibinfo {pages} {012037} (\bibinfo
  {year} {2011})}\BibitemShut {NoStop}%
\bibitem [{\citenamefont {\AA{}berg}(2014)}]{PhysRevLett.113.150402}%
  \BibitemOpen
  \bibfield  {author} {\bibinfo {author} {\bibfnamefont {J.}~\bibnamefont
  {\AA{}berg}},\ }\href {\doibase 10.1103/PhysRevLett.113.150402} {\bibfield
  {journal} {\bibinfo  {journal} {Phys. Rev. Lett.}\ }\textbf {\bibinfo
  {volume} {113}},\ \bibinfo {pages} {150402} (\bibinfo {year}
  {2014})}\BibitemShut {NoStop}%
\bibitem [{\citenamefont {Lostaglio}\ \emph
  {et~al.}(2015{\natexlab{a}})\citenamefont {Lostaglio}, \citenamefont
  {Jennings},\ and\ \citenamefont {Rudolph}}]{Lostaglio2015}%
  \BibitemOpen
  \bibfield  {author} {\bibinfo {author} {\bibfnamefont {M.}~\bibnamefont
  {Lostaglio}}, \bibinfo {author} {\bibfnamefont {D.}~\bibnamefont {Jennings}},
  \ and\ \bibinfo {author} {\bibfnamefont {T.}~\bibnamefont {Rudolph}},\ }\href
  {https://doi.org/10.1038/ncomms7383} {\bibfield  {journal} {\bibinfo
  {journal} {Nature Communications}\ }\textbf {\bibinfo {volume} {6}},\
  \bibinfo {pages} {6383 EP } (\bibinfo {year} {2015}{\natexlab{a}})},\
  \bibinfo {note} {article}\BibitemShut {NoStop}%
\bibitem [{\citenamefont {\ifmmode \acute{C}\else
  \'{C}\fi{}wikli\ifmmode~\acute{n}\else \'{n}\fi{}ski}\ \emph
  {et~al.}(2015)\citenamefont {\ifmmode \acute{C}\else
  \'{C}\fi{}wikli\ifmmode~\acute{n}\else \'{n}\fi{}ski}, \citenamefont
  {Studzi\ifmmode~\acute{n}\else \'{n}\fi{}ski}, \citenamefont {Horodecki},\
  and\ \citenamefont {Oppenheim}}]{PhysRevLett.115.210403}%
  \BibitemOpen
  \bibfield  {author} {\bibinfo {author} {\bibfnamefont {P.}~\bibnamefont
  {\ifmmode \acute{C}\else \'{C}\fi{}wikli\ifmmode~\acute{n}\else
  \'{n}\fi{}ski}}, \bibinfo {author} {\bibfnamefont {M.}~\bibnamefont
  {Studzi\ifmmode~\acute{n}\else \'{n}\fi{}ski}}, \bibinfo {author}
  {\bibfnamefont {M.}~\bibnamefont {Horodecki}}, \ and\ \bibinfo {author}
  {\bibfnamefont {J.}~\bibnamefont {Oppenheim}},\ }\href {\doibase
  10.1103/PhysRevLett.115.210403} {\bibfield  {journal} {\bibinfo  {journal}
  {Phys. Rev. Lett.}\ }\textbf {\bibinfo {volume} {115}},\ \bibinfo {pages}
  {210403} (\bibinfo {year} {2015})}\BibitemShut {NoStop}%
\bibitem [{\citenamefont {Lostaglio}\ \emph
  {et~al.}(2015{\natexlab{b}})\citenamefont {Lostaglio}, \citenamefont
  {Korzekwa}, \citenamefont {Jennings},\ and\ \citenamefont
  {Rudolph}}]{PhysRevX.5.021001}%
  \BibitemOpen
  \bibfield  {author} {\bibinfo {author} {\bibfnamefont {M.}~\bibnamefont
  {Lostaglio}}, \bibinfo {author} {\bibfnamefont {K.}~\bibnamefont {Korzekwa}},
  \bibinfo {author} {\bibfnamefont {D.}~\bibnamefont {Jennings}}, \ and\
  \bibinfo {author} {\bibfnamefont {T.}~\bibnamefont {Rudolph}},\ }\href
  {\doibase 10.1103/PhysRevX.5.021001} {\bibfield  {journal} {\bibinfo
  {journal} {Phys. Rev. X}\ }\textbf {\bibinfo {volume} {5}},\ \bibinfo {pages}
  {021001} (\bibinfo {year} {2015}{\natexlab{b}})}\BibitemShut {NoStop}%
\bibitem [{\citenamefont {Narasimhachar}\ and\ \citenamefont
  {Gour}(2015)}]{Narasimhachar2015}%
  \BibitemOpen
  \bibfield  {author} {\bibinfo {author} {\bibfnamefont {V.}~\bibnamefont
  {Narasimhachar}}\ and\ \bibinfo {author} {\bibfnamefont {G.}~\bibnamefont
  {Gour}},\ }\href {https://doi.org/10.1038/ncomms8689} {\bibfield  {journal}
  {\bibinfo  {journal} {Nature Communications}\ }\textbf {\bibinfo {volume}
  {6}},\ \bibinfo {pages} {7689 EP } (\bibinfo {year} {2015})},\ \bibinfo
  {note} {article}\BibitemShut {NoStop}%
\bibitem [{\citenamefont {Korzekwa}\ \emph {et~al.}(2016)\citenamefont
  {Korzekwa}, \citenamefont {Lostaglio}, \citenamefont {Oppenheim},\ and\
  \citenamefont {Jennings}}]{Korzekwa_2016}%
  \BibitemOpen
  \bibfield  {author} {\bibinfo {author} {\bibfnamefont {K.}~\bibnamefont
  {Korzekwa}}, \bibinfo {author} {\bibfnamefont {M.}~\bibnamefont {Lostaglio}},
  \bibinfo {author} {\bibfnamefont {J.}~\bibnamefont {Oppenheim}}, \ and\
  \bibinfo {author} {\bibfnamefont {D.}~\bibnamefont {Jennings}},\ }\href
  {\doibase 10.1088/1367-2630/18/2/023045} {\bibfield  {journal} {\bibinfo
  {journal} {New Journal of Physics}\ }\textbf {\bibinfo {volume} {18}},\
  \bibinfo {pages} {023045} (\bibinfo {year} {2016})}\BibitemShut {NoStop}%
\bibitem [{\citenamefont {Streltsov}\ \emph {et~al.}(2015)\citenamefont
  {Streltsov}, \citenamefont {Singh}, \citenamefont {Dhar}, \citenamefont
  {Bera},\ and\ \citenamefont {Adesso}}]{PhysRevLett.115.020403}%
  \BibitemOpen
  \bibfield  {author} {\bibinfo {author} {\bibfnamefont {A.}~\bibnamefont
  {Streltsov}}, \bibinfo {author} {\bibfnamefont {U.}~\bibnamefont {Singh}},
  \bibinfo {author} {\bibfnamefont {H.~S.}\ \bibnamefont {Dhar}}, \bibinfo
  {author} {\bibfnamefont {M.~N.}\ \bibnamefont {Bera}}, \ and\ \bibinfo
  {author} {\bibfnamefont {G.}~\bibnamefont {Adesso}},\ }\href {\doibase
  10.1103/PhysRevLett.115.020403} {\bibfield  {journal} {\bibinfo  {journal}
  {Phys. Rev. Lett.}\ }\textbf {\bibinfo {volume} {115}},\ \bibinfo {pages}
  {020403} (\bibinfo {year} {2015})}\BibitemShut {NoStop}%
\bibitem [{\citenamefont {Chitambar}\ and\ \citenamefont
  {Hsieh}(2016)}]{PhysRevLett.117.020402}%
  \BibitemOpen
  \bibfield  {author} {\bibinfo {author} {\bibfnamefont {E.}~\bibnamefont
  {Chitambar}}\ and\ \bibinfo {author} {\bibfnamefont {M.-H.}\ \bibnamefont
  {Hsieh}},\ }\href {\doibase 10.1103/PhysRevLett.117.020402} {\bibfield
  {journal} {\bibinfo  {journal} {Phys. Rev. Lett.}\ }\textbf {\bibinfo
  {volume} {117}},\ \bibinfo {pages} {020402} (\bibinfo {year}
  {2016})}\BibitemShut {NoStop}%
\bibitem [{\citenamefont {Lostaglio}\ and\ \citenamefont
  {M\"uller}(2019)}]{PhysRevLett.123.020403}%
  \BibitemOpen
  \bibfield  {author} {\bibinfo {author} {\bibfnamefont {M.}~\bibnamefont
  {Lostaglio}}\ and\ \bibinfo {author} {\bibfnamefont {M.~P.}\ \bibnamefont
  {M\"uller}},\ }\href {\doibase 10.1103/PhysRevLett.123.020403} {\bibfield
  {journal} {\bibinfo  {journal} {Phys. Rev. Lett.}\ }\textbf {\bibinfo
  {volume} {123}},\ \bibinfo {pages} {020403} (\bibinfo {year}
  {2019})}\BibitemShut {NoStop}%
\bibitem [{\citenamefont {Marvian}\ and\ \citenamefont
  {Spekkens}(2019)}]{PhysRevLett.123.020404}%
  \BibitemOpen
  \bibfield  {author} {\bibinfo {author} {\bibfnamefont {I.}~\bibnamefont
  {Marvian}}\ and\ \bibinfo {author} {\bibfnamefont {R.~W.}\ \bibnamefont
  {Spekkens}},\ }\href {\doibase 10.1103/PhysRevLett.123.020404} {\bibfield
  {journal} {\bibinfo  {journal} {Phys. Rev. Lett.}\ }\textbf {\bibinfo
  {volume} {123}},\ \bibinfo {pages} {020404} (\bibinfo {year}
  {2019})}\BibitemShut {NoStop}%
\bibitem [{\citenamefont {Vaccaro}\ \emph {et~al.}(2018)\citenamefont
  {Vaccaro}, \citenamefont {Croke},\ and\ \citenamefont
  {Barnett}}]{Vaccaro_2018}%
  \BibitemOpen
  \bibfield  {author} {\bibinfo {author} {\bibfnamefont {J.~A.}\ \bibnamefont
  {Vaccaro}}, \bibinfo {author} {\bibfnamefont {S.}~\bibnamefont {Croke}}, \
  and\ \bibinfo {author} {\bibfnamefont {S.~M.}\ \bibnamefont {Barnett}},\
  }\href {\doibase 10.1088/1751-8121/aac112} {\bibfield  {journal} {\bibinfo
  {journal} {Journal of Physics A: Mathematical and Theoretical}\ }\textbf
  {\bibinfo {volume} {51}},\ \bibinfo {pages} {414008} (\bibinfo {year}
  {2018})}\BibitemShut {NoStop}%
\bibitem [{\citenamefont {Bartlett}\ \emph {et~al.}(2006)\citenamefont
  {Bartlett}, \citenamefont {Rudolph}, \citenamefont {Spekkens},\ and\
  \citenamefont {Turner}}]{Bartlett_2006}%
  \BibitemOpen
  \bibfield  {author} {\bibinfo {author} {\bibfnamefont {S.~D.}\ \bibnamefont
  {Bartlett}}, \bibinfo {author} {\bibfnamefont {T.}~\bibnamefont {Rudolph}},
  \bibinfo {author} {\bibfnamefont {R.~W.}\ \bibnamefont {Spekkens}}, \ and\
  \bibinfo {author} {\bibfnamefont {P.~S.}\ \bibnamefont {Turner}},\ }\href
  {\doibase 10.1088/1367-2630/8/4/058} {\bibfield  {journal} {\bibinfo
  {journal} {New Journal of Physics}\ }\textbf {\bibinfo {volume} {8}},\
  \bibinfo {pages} {58} (\bibinfo {year} {2006})}\BibitemShut {NoStop}%
\bibitem [{\citenamefont {Poulin}\ and\ \citenamefont
  {Yard}(2007)}]{Poulin_2007}%
  \BibitemOpen
  \bibfield  {author} {\bibinfo {author} {\bibfnamefont {D.}~\bibnamefont
  {Poulin}}\ and\ \bibinfo {author} {\bibfnamefont {J.}~\bibnamefont {Yard}},\
  }\href {\doibase 10.1088/1367-2630/9/5/156} {\bibfield  {journal} {\bibinfo
  {journal} {New Journal of Physics}\ }\textbf {\bibinfo {volume} {9}},\
  \bibinfo {pages} {156} (\bibinfo {year} {2007})}\BibitemShut {NoStop}%
\bibitem [{\citenamefont {Bartlett}\ \emph
  {et~al.}(2007{\natexlab{b}})\citenamefont {Bartlett}, \citenamefont
  {Rudolph}, \citenamefont {Sanders},\ and\ \citenamefont
  {Turner}}]{doi:10.1080/09500340701289254}%
  \BibitemOpen
  \bibfield  {author} {\bibinfo {author} {\bibfnamefont {S.~D.}\ \bibnamefont
  {Bartlett}}, \bibinfo {author} {\bibfnamefont {T.}~\bibnamefont {Rudolph}},
  \bibinfo {author} {\bibfnamefont {B.~C.}\ \bibnamefont {Sanders}}, \ and\
  \bibinfo {author} {\bibfnamefont {P.~S.}\ \bibnamefont {Turner}},\ }\href
  {\doibase 10.1080/09500340701289254} {\bibfield  {journal} {\bibinfo
  {journal} {Journal of Modern Optics}\ }\textbf {\bibinfo {volume} {54}},\
  \bibinfo {pages} {2211} (\bibinfo {year} {2007}{\natexlab{b}})}\BibitemShut
  {NoStop}%
\bibitem [{\citenamefont {Aharonov}\ \emph {et~al.}(1998)\citenamefont
  {Aharonov}, \citenamefont {Kaufherr}, \citenamefont {Popescu},\ and\
  \citenamefont {Reznik}}]{PhysRevLett.80.2023}%
  \BibitemOpen
  \bibfield  {author} {\bibinfo {author} {\bibfnamefont {Y.}~\bibnamefont
  {Aharonov}}, \bibinfo {author} {\bibfnamefont {T.}~\bibnamefont {Kaufherr}},
  \bibinfo {author} {\bibfnamefont {S.}~\bibnamefont {Popescu}}, \ and\
  \bibinfo {author} {\bibfnamefont {B.}~\bibnamefont {Reznik}},\ }\href
  {\doibase 10.1103/PhysRevLett.80.2023} {\bibfield  {journal} {\bibinfo
  {journal} {Phys. Rev. Lett.}\ }\textbf {\bibinfo {volume} {80}},\ \bibinfo
  {pages} {2023} (\bibinfo {year} {1998})}\BibitemShut {NoStop}%
\bibitem [{\citenamefont {Chitambar}\ and\ \citenamefont
  {Gour}(2019)}]{RevModPhys.91.025001}%
  \BibitemOpen
  \bibfield  {author} {\bibinfo {author} {\bibfnamefont {E.}~\bibnamefont
  {Chitambar}}\ and\ \bibinfo {author} {\bibfnamefont {G.}~\bibnamefont
  {Gour}},\ }\href {\doibase 10.1103/RevModPhys.91.025001} {\bibfield
  {journal} {\bibinfo  {journal} {Rev. Mod. Phys.}\ }\textbf {\bibinfo {volume}
  {91}},\ \bibinfo {pages} {025001} (\bibinfo {year} {2019})}\BibitemShut
  {NoStop}%
\bibitem [{\citenamefont {Mani}\ and\ \citenamefont
  {Karimipour}(2015)}]{PhysRevA.92.032331}%
  \BibitemOpen
  \bibfield  {author} {\bibinfo {author} {\bibfnamefont {A.}~\bibnamefont
  {Mani}}\ and\ \bibinfo {author} {\bibfnamefont {V.}~\bibnamefont
  {Karimipour}},\ }\href {\doibase 10.1103/PhysRevA.92.032331} {\bibfield
  {journal} {\bibinfo  {journal} {Phys. Rev. A}\ }\textbf {\bibinfo {volume}
  {92}},\ \bibinfo {pages} {032331} (\bibinfo {year} {2015})}\BibitemShut
  {NoStop}%
\bibitem [{\citenamefont {Bu}\ \emph {et~al.}(2017)\citenamefont {Bu},
  \citenamefont {Kumar}, \citenamefont {Zhang},\ and\ \citenamefont
  {Wu}}]{BU20171670}%
  \BibitemOpen
  \bibfield  {author} {\bibinfo {author} {\bibfnamefont {K.}~\bibnamefont
  {Bu}}, \bibinfo {author} {\bibfnamefont {A.}~\bibnamefont {Kumar}}, \bibinfo
  {author} {\bibfnamefont {L.}~\bibnamefont {Zhang}}, \ and\ \bibinfo {author}
  {\bibfnamefont {J.}~\bibnamefont {Wu}},\ }\href {\doibase
  https://doi.org/10.1016/j.physleta.2017.03.022} {\bibfield  {journal}
  {\bibinfo  {journal} {Physics Letters A}\ }\textbf {\bibinfo {volume}
  {381}},\ \bibinfo {pages} {1670 } (\bibinfo {year} {2017})}\BibitemShut
  {NoStop}%
\bibitem [{\citenamefont {{Chen}}\ \emph {et~al.}(2019)\citenamefont {{Chen}},
  \citenamefont {{Zhang}}, \citenamefont {{Zhou}},\ and\ \citenamefont
  {{Zhao}}}]{2019arXiv190609067C}%
  \BibitemOpen
  \bibfield  {author} {\bibinfo {author} {\bibfnamefont {S.}~\bibnamefont
  {{Chen}}}, \bibinfo {author} {\bibfnamefont {X.}~\bibnamefont {{Zhang}}},
  \bibinfo {author} {\bibfnamefont {Y.}~\bibnamefont {{Zhou}}}, \ and\ \bibinfo
  {author} {\bibfnamefont {Q.}~\bibnamefont {{Zhao}}},\ }\href@noop {}
  {\bibfield  {journal} {\bibinfo  {journal} {arXiv e-prints}\ ,\ \bibinfo
  {eid} {arXiv:1906.09067}} (\bibinfo {year} {2019})},\ \Eprint
  {http://arxiv.org/abs/1906.09067} {arXiv:1906.09067 [quant-ph]} \BibitemShut
  {NoStop}%
\bibitem [{\citenamefont {Glauber}(1963)}]{PhysRevLett.10.84}%
  \BibitemOpen
  \bibfield  {author} {\bibinfo {author} {\bibfnamefont {R.~J.}\ \bibnamefont
  {Glauber}},\ }\href {\doibase 10.1103/PhysRevLett.10.84} {\bibfield
  {journal} {\bibinfo  {journal} {Phys. Rev. Lett.}\ }\textbf {\bibinfo
  {volume} {10}},\ \bibinfo {pages} {84} (\bibinfo {year} {1963})}\BibitemShut
  {NoStop}%
\bibitem [{\citenamefont {Schnabel}(2017)}]{SCHNABEL20171}%
  \BibitemOpen
  \bibfield  {author} {\bibinfo {author} {\bibfnamefont {R.}~\bibnamefont
  {Schnabel}},\ }\href {\doibase https://doi.org/10.1016/j.physrep.2017.04.001}
  {\bibfield  {journal} {\bibinfo  {journal} {Physics Reports}\ }\textbf
  {\bibinfo {volume} {684}},\ \bibinfo {pages} {1 } (\bibinfo {year} {2017})},\
  \bibinfo {note} {squeezed states of light and their applications in laser
  interferometers}\BibitemShut {NoStop}%
\bibitem [{\citenamefont {{Jaynes}}\ and\ \citenamefont
  {{Cummings}}(1963)}]{1443594}%
  \BibitemOpen
  \bibfield  {author} {\bibinfo {author} {\bibfnamefont {E.~T.}\ \bibnamefont
  {{Jaynes}}}\ and\ \bibinfo {author} {\bibfnamefont {F.~W.}\ \bibnamefont
  {{Cummings}}},\ }\href {\doibase 10.1109/PROC.1963.1664} {\bibfield
  {journal} {\bibinfo  {journal} {Proceedings of the IEEE}\ }\textbf {\bibinfo
  {volume} {51}},\ \bibinfo {pages} {89} (\bibinfo {year} {1963})}\BibitemShut
  {NoStop}%
\bibitem [{\citenamefont {Shore}\ and\ \citenamefont
  {Knight}(1993)}]{doi:10.1080/09500349314551321}%
  \BibitemOpen
  \bibfield  {author} {\bibinfo {author} {\bibfnamefont {B.~W.}\ \bibnamefont
  {Shore}}\ and\ \bibinfo {author} {\bibfnamefont {P.~L.}\ \bibnamefont
  {Knight}},\ }\href {\doibase 10.1080/09500349314551321} {\bibfield  {journal}
  {\bibinfo  {journal} {Journal of Modern Optics}\ }\textbf {\bibinfo {volume}
  {40}},\ \bibinfo {pages} {1195} (\bibinfo {year} {1993})}\BibitemShut
  {NoStop}%
\bibitem [{\citenamefont {Messinger}\ \emph {et~al.}(2020)\citenamefont
  {Messinger}, \citenamefont {Ritboon}, \citenamefont {Crimin}, \citenamefont
  {Croke},\ and\ \citenamefont {Barnett}}]{10.1088/1367-2630/ab7607}%
  \BibitemOpen
  \bibfield  {author} {\bibinfo {author} {\bibfnamefont {A.}~\bibnamefont
  {Messinger}}, \bibinfo {author} {\bibfnamefont {A.}~\bibnamefont {Ritboon}},
  \bibinfo {author} {\bibfnamefont {F.}~\bibnamefont {Crimin}}, \bibinfo
  {author} {\bibfnamefont {S.}~\bibnamefont {Croke}}, \ and\ \bibinfo {author}
  {\bibfnamefont {S.}~\bibnamefont {Barnett}},\ }\href
  {http://iopscience.iop.org/10.1088/1367-2630/ab7607} {\bibfield  {journal}
  {\bibinfo  {journal} {New Journal of Physics}\ } (\bibinfo {year}
  {2020})}\BibitemShut {NoStop}%
\bibitem [{\citenamefont {Pozas-Kerstjens}\ and\ \citenamefont
  {Mart\'{\i}n-Mart\'{\i}nez}(2015)}]{PhysRevD.92.064042}%
  \BibitemOpen
  \bibfield  {author} {\bibinfo {author} {\bibfnamefont {A.}~\bibnamefont
  {Pozas-Kerstjens}}\ and\ \bibinfo {author} {\bibfnamefont {E.}~\bibnamefont
  {Mart\'{\i}n-Mart\'{\i}nez}},\ }\href {\doibase 10.1103/PhysRevD.92.064042}
  {\bibfield  {journal} {\bibinfo  {journal} {Phys. Rev. D}\ }\textbf {\bibinfo
  {volume} {92}},\ \bibinfo {pages} {064042} (\bibinfo {year}
  {2015})}\BibitemShut {NoStop}%
\bibitem [{\citenamefont {Simidzija}\ and\ \citenamefont
  {Mart\'{\i}n-Mart\'{\i}nez}(2017)}]{PhysRevD.96.025020}%
  \BibitemOpen
  \bibfield  {author} {\bibinfo {author} {\bibfnamefont {P.}~\bibnamefont
  {Simidzija}}\ and\ \bibinfo {author} {\bibfnamefont {E.}~\bibnamefont
  {Mart\'{\i}n-Mart\'{\i}nez}},\ }\href {\doibase 10.1103/PhysRevD.96.025020}
  {\bibfield  {journal} {\bibinfo  {journal} {Phys. Rev. D}\ }\textbf {\bibinfo
  {volume} {96}},\ \bibinfo {pages} {025020} (\bibinfo {year}
  {2017})}\BibitemShut {NoStop}%
\bibitem [{\citenamefont {Simidzija}\ and\ \citenamefont
  {Mart\'{\i}n-Mart\'{\i}nez}(2018)}]{PhysRevD.98.085007}%
  \BibitemOpen
  \bibfield  {author} {\bibinfo {author} {\bibfnamefont {P.}~\bibnamefont
  {Simidzija}}\ and\ \bibinfo {author} {\bibfnamefont {E.}~\bibnamefont
  {Mart\'{\i}n-Mart\'{\i}nez}},\ }\href {\doibase 10.1103/PhysRevD.98.085007}
  {\bibfield  {journal} {\bibinfo  {journal} {Phys. Rev. D}\ }\textbf {\bibinfo
  {volume} {98}},\ \bibinfo {pages} {085007} (\bibinfo {year}
  {2018})}\BibitemShut {NoStop}%
\bibitem [{\citenamefont {Simidzija}\ \emph {et~al.}(2018)\citenamefont
  {Simidzija}, \citenamefont {Jonsson},\ and\ \citenamefont
  {Mart\'{\i}n-Mart\'{\i}nez}}]{PhysRevD.97.125002}%
  \BibitemOpen
  \bibfield  {author} {\bibinfo {author} {\bibfnamefont {P.}~\bibnamefont
  {Simidzija}}, \bibinfo {author} {\bibfnamefont {R.~H.}\ \bibnamefont
  {Jonsson}}, \ and\ \bibinfo {author} {\bibfnamefont {E.}~\bibnamefont
  {Mart\'{\i}n-Mart\'{\i}nez}},\ }\href {\doibase 10.1103/PhysRevD.97.125002}
  {\bibfield  {journal} {\bibinfo  {journal} {Phys. Rev. D}\ }\textbf {\bibinfo
  {volume} {97}},\ \bibinfo {pages} {125002} (\bibinfo {year}
  {2018})}\BibitemShut {NoStop}%
\bibitem [{\citenamefont {Wong}\ and\ \citenamefont {Zhao}(2002)}]{Wong2002}%
  \BibitemOpen
  \bibfield  {author} {\bibinfo {author} {\bibnamefont {Wong}}\ and\ \bibinfo
  {author} {\bibnamefont {Zhao}},\ }\href {\doibase 10.1007/s00365-001-0019-3}
  {\bibfield  {journal} {\bibinfo  {journal} {Constructive Approximation}\
  }\textbf {\bibinfo {volume} {18}},\ \bibinfo {pages} {355} (\bibinfo {year}
  {2002})}\BibitemShut {NoStop}%
\end{thebibliography}%
\section{Supplemental material}
\subsection{Strong faithful extraction of quantum coherence stored in $d$-level systems}
\noindent Let
\begin{equation}\label{supeq1}
    {H}_S=\epsilon_0\sum_{i=0}^{d-1}i\ketbra{i}
\end{equation}
be the Hamiltonian of a system with $d$-energy levels that will act as a storage for coherence. In order to extract coherence from a larger system with Hamiltonian
\begin{equation}\label{supeq2}
    H_R=\epsilon_0\sum_{n=0}^{N-1}n\ketbra{n},
\end{equation}
we need to interact the combined system with the following unitary interaction
\begin{equation}\label{supeq3}
    V=\sum_{i,j=0}^{d-1}\ketbra{i}{j}\otimes\Delta^jP_d(\Delta^\dagger)^i,
\end{equation}
where 
\begin{equation}\label{supeq4}
    P_d=\sum_{n=0}^{N/d-1}\ketbra{nd}
\end{equation} is the projection onto the subspace spanned by those eigenstates of $H_R$ with energies some multiple of $d$, and
\begin{equation}\label{supeq5}
    \Delta=\sum_{n=0}^{N-2}\ketbra{n+1}{n}
\end{equation}
is the \emph{shift operator}. Since it can always be assumed that the reservoir is part of some larger system, we only consider the case in which the number of energy levels of the reservoir is also some multiple of $d$, $(N \mbox{ mod }d)=0$. From
\begin{equation}\label{supeq6}
P_d(\Delta^\dagger)^i\Delta^jP_d=\delta_{ij}P_d
\end{equation}
and
\begin{equation}\label{supeq7}
    \sum_{i=1}^{d-1}\Delta^iP_d(\Delta^\dagger)^i=I_R,
\end{equation}
it can be shown that $V$ is indeed unitary and also energy conserving, $[ H_{tot},V]=0$ where $H_{tot}=H_S+H_R$ is the total Hamiltonian of the combined system. Since for any $0\leq i,j\leq d-1$ and $0\leq n\leq N/d-1$
\begin{equation}\label{supeq8}
    V(\ket{i}\otimes\ket{dn+j})=\ket{j}\otimes\ket{dn+i},
\end{equation}
it follows that $\mathcal{C}(V)=0$ and $V\in\bar{\mathcal{L}}({H}_S+{H}_R)$, so the protocol is faithful.
\begin{figure}
    \centering
    \includegraphics[width=\columnwidth]{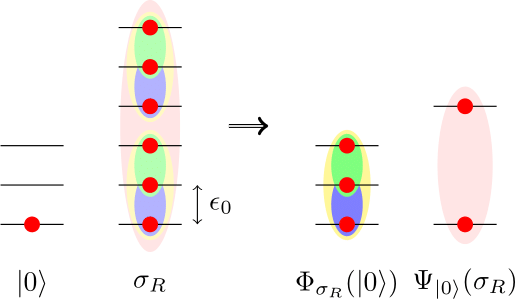}
    \caption{Strong faithful extraction from a $6$-level energy reservoir onto a $3$-level system initially in it's ground state. The protocol extracts the coherence between disjoint pairs of energy levels of the reservoir and stores it as coherence between corresponding energy levels in the system. After extraction any remaining coherence between energy levels with distance less than $3\epsilon_0$ in the reservoir has been destroyed. Note that the process cannot be repeated if we wish to extract further coherence to a $3$-level system, but can be repeated for extraction to $3$ qubits.}
    \label{fig6}
\end{figure}

Suppose that initially $\rho_S=\ketbra{0}$, evolving the combined system with the help of \cref{supeq3} and tracing out the reservoir we find that the state of the system is given by the action of an induced channel $\Phi_{\sigma_R}$ acting on $\mathscr{H}_S$
\begin{equation}\label{supeq9}
    \Phi_{\sigma_R}(\ket{0})=\sum_{i,j=0}^{d-1}\tr(\Delta^jP_d(\Delta^\dagger)^i\sigma_R)\ketbra{i}{j}.
\end{equation}
Similarly by tracing out $S$ we obtain the new state of the reservoir given by the action of an induced quantum channel $\Psi_{\ket{0}}$ acting on $\mathscr{H}_R$
\begin{equation}\label{supeq10}
    \Psi_{\ket{0}}(\sigma_R)=P_d\left(\sum_{i=0}^{d-1}(\Delta^\dagger)^i\sigma_R\Delta^i\right)P_d.
\end{equation}
From \cref{supeq9}, the amount of extracted coherence, as measured using the $\ell1$-norm of coherence, is equal to
\begin{equation}\label{supeq11}
C_{\ell_1}(\Phi_{\sigma_R}(\ket{0}))=2\sum_{j>i=0}^{d-1}\abs{\tr(\Delta^jP_d(\Delta^\dagger)^i\sigma_R)}.
\end{equation}
By expanding 
\begin{equation}\label{supeq12}
    \tr(\Delta^jP_d(\Delta^\dagger)^i\sigma_R)=\sum_{n=0}^{N/d-1}\sigma_{nd+i,nd+j}
\end{equation}
where $\sigma_{n,n'}=\bra{n}\sigma_R\ket{n'}$ are the reservoir's elements, we can observe that the protocol extracts the coherence between disjoint pairs of energy levels of the reservoir with labels $(n\mbox{ mod }d)=i$ and $(n\mbox{ mod }d)=j$ and stores it as coherence between the $i$-th and $j$-th energy level of the system (\cref{fig6}).

From \cref{supeq10} it follows that we can treat the reservoir as a system with an effective Hamiltonian equal to 
\begin{equation}\label{supeq13}
    H'_R=P_dH_RP_d.
\end{equation}
In order to extract coherence a second time we therefore need to scale the extracted system's Hamiltonian by a factor of $d$. Replacing, $P_d\to P_{d^2}$, $\Delta\to\Delta^dP_d$ and $\sigma_R\to\Psi_{\ket{0}}(\sigma_R)$ in \cref{supeq11} we can compute the newly extracted amount. Repeating the same kind of reasoning each time it can be shown by induction that after $m$ extractions, the state of the extracted system with Hamiltonian
\begin{equation}\label{supeq14}
    H_S^{(m)}=\bigoplus_{i=1}^{d^{m-1}}H_S
\end{equation}
will be equal to
\begin{equation}\label{supeq15}
    \rho_S^{(m)}=\sum_{i,j=0}^{d-1}\tr(\Delta^{jd^{m-1}}P_d^{(m)}(\Delta^\dagger)^{id^{m-1}}\sigma_R)\ketbra{i^{(m)}}{j^{(m)}}
\end{equation}
where $\ket{i^{(m)}}$ denote eigenstates of $H_S^{(m)}$ with energy equal to $id^{m-1}\epsilon_0$,
\begin{equation}\label{supeq16}
    P_d^{(m)}=\sum_{k=0}^{d^{m-1}-1}\Delta^kP_{d^m}(\Delta^\dagger)^k
\end{equation}
and 
\begin{equation}\label{supeq17}
    P_{d^m}=\sum_n\ketbra{nd^m}
\end{equation}
where the sum is taken over those integer values of $n\leq N/d^m-1$. Similarly the reservoir each time will be reduced to
\begin{equation}\label{supeq18}
    \sigma_R^{(m)}=P_{d^m}\left(\sum_{i=0}^{d^m-1}(\Delta^\dagger)^i\sigma_R\Delta^i\right)P_{d^m}
\end{equation}
and the amount of extracted coherence is equal to
\begin{equation}\label{supeq19}
    C_{\ell_1}(\rho_S^{(m)})=2\sum_{j>i=0}^{d-1}\abs{\tr(\Delta^{jd^{m-1}}P_d^{(m)}(\Delta^\dagger)^{id^{m-1}}\sigma_R)}.
\end{equation}
units of coherence. Since the total energy of the extracted systems cannot exceed that of the reservoir, $(N-1)\epsilon_0$, it follows that
\begin{equation}\label{supeq20}
    (d-1)\epsilon_0\sum_{m=1}^{M}d^{m-1}\leq (N-1)\epsilon_0
\end{equation}
so the protocol cannot be repeated more that {$M=\lfloor\log_dN\rfloor$} times.
\subsection{Asymptotic behaviour of $F$ and $G$}
\begin{lem}
Let
    \begin{equation}\label{supeq21}
    F_{d;k,k'}(x)=e^{-x}\sum_{n=0}^\infty \frac{x^{nd+\frac{k+k'}{2}}}{\sqrt{(nd+k)!(nd+k')!}}
    \end{equation}
with $0\leq k,k'\leq d-1$, then
    \begin{equation}\label{supeq22}
    \lim_{x\to\infty}F_{d;k,k'}(x)=\frac{1}{d}.
    \end{equation}
\end{lem}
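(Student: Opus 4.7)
The core idea is to interpret each term in the series as a geometric mean of Poisson point-probabilities. Writing $p_m(x):=e^{-x}x^m/m!$, we have
\begin{equation*}
F_{d;k,k'}(x)=\sum_{n=0}^{\infty}\sqrt{p_{nd+k}(x)\,p_{nd+k'}(x)}.
\end{equation*}
From this form the value $1/d$ arises from two classical facts: a $\mathrm{Poisson}(x)$ variable becomes asymptotically equidistributed modulo $d$, and shifting it by the bounded integer $l:=k'-k$ is negligible on the scale of its width $\sqrt{x}$. My plan is to prove the statement by a squeeze, showing $\limsup F_{d;k,k'}(x)\le 1/d$ and $\liminf F_{d;k,k'}(x)\ge 1/d$ separately.

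For the upper bound I would apply Cauchy--Schwarz,
\begin{equation*}
F_{d;k,k'}(x)\le\sqrt{P_k(x)\,P_{k'}(x)},\qquad P_r(x):=\sum_{n}p_{nd+r}(x),
\end{equation*}
and then handle the equidistribution by a discrete Fourier expansion with $\omega=e^{2\pi i/d}$:
\begin{equation*}
P_r(x)=\frac{1}{d}\sum_{j=0}^{d-1}\omega^{-jr}\,e^{x(\omega^j-1)}.
\end{equation*}
The $j=0$ term contributes $1/d$, and for $j\neq 0$ the factor $|e^{x(\omega^j-1)}|=e^{-x(1-\cos(2\pi j/d))}$ decays exponentially, so $P_r(x)\to 1/d$. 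This yields the upper bound.

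For the lower bound I would exploit the identity $\sqrt{ab}=\tfrac{1}{2}(a+b)-\tfrac{1}{2}(\sqrt{a}-\sqrt{b})^{2}$ to rewrite
\begin{equation*}
F_{d;k,k'}(x)=\frac{P_k(x)+P_{k'}(x)}{2}-\frac{1}{2}\sum_{n=0}^{\infty}\Bigl(\sqrt{p_{nd+k}(x)}-\sqrt{p_{nd+k'}(x)}\Bigr)^{2}.
\end{equation*}
The leading term tends to $1/d$ by the previous step, so the task reduces to showing that the Hellinger-type residual $E(x)$ vanishes. I would split the sum at the bulk $B=\{n:|nd+k-x|\le x^{2/3}\}$. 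Inside $B$, the ratio $p_{nd+k'}/p_{nd+k}=x^{l}/[(nd+k+1)\cdots(nd+k+l)]$ is $1+O(x^{-1/3})$ uniformly by a one-step Taylor expansion in $(nd+k-x)/x$, so $(\sqrt{p_{nd+k'}}-\sqrt{p_{nd+k}})^{2}=p_{nd+k}\cdot O(x^{-2/3})$ and the bulk contribution is $O(x^{-2/3})$. Outside $B$, Chebyshev's inequality applied to $\mathrm{Poisson}(x)$ (variance $x$) bounds the tail mass by $O(x^{-1/3})$, and the crude estimate $(\sqrt{a}-\sqrt{b})^{2}\le 2(a+b)$ keeps that contribution of the same order. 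Altogether $E(x)=O(x^{-1/3})\to 0$, which gives the matching lower bound.

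The main obstacle is the uniform bulk estimate of the ratio $p_{nd+k'}/p_{nd+k}$; it is routine but requires some care in choosing the cutoff $x^{2/3}$ so that the bulk $O(x^{-2/3})$ and tail $O(x^{-1/3})$ contributions are balanced. Every other ingredient reduces to discrete Fourier inversion and a one-line moment bound.
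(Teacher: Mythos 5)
Your argument is correct, and it takes a genuinely different route from the paper's. The paper splits off a finite head of the series, replaces the tail coefficients $1/\sqrt{(nd+k)!(nd+k')!}$ by $1/\Gamma\left(nd+\tfrac{k+k'}{2}+1\right)$ (justified via logarithmic convexity of the Gamma function), recognizes the resulting series as the two-parameter Mittag-Leffler function $E_{d,\frac{k+k'}{2}+1}(x^d)$, and reads off the limit $1/d$ from its known asymptotic expansion. You instead interpret the terms as geometric means of Poisson point masses and squeeze: the upper bound comes from Cauchy--Schwarz together with the roots-of-unity filter showing that a $\mathrm{Poisson}(x)$ variable equidistributes mod $d$ exponentially fast, and the lower bound from controlling the Hellinger-type residual by a bulk/tail split at $\abs{nd+k-x}\le x^{2/3}$. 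Your version is more elementary and self-contained --- it needs no special-function asymptotics and yields an explicit rate $O(x^{-1/3})$ --- at the cost of being longer; it also sidesteps the one soft spot in the paper's proof, namely that the error incurred by replacing infinitely many tail coefficients by their asymptotic equivalents is not explicitly bounded there. Two cosmetic remarks: when $k=k'$ the residual vanishes identically, so you may assume $l=k'-k\ge 1$ from the outset; and since $(\sqrt{a}-\sqrt{b})^2\le a+b$, the factor $2$ in your tail estimate is unnecessary (though harmless).
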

\begin{proof}
Let $N$ be a sufficiently large integer and 
\begin{equation}\label{supeq23}
    c_n(d)=\frac{1}{\sqrt{(nd+k)!(nd+k')!}}
\end{equation} 
then
\begin{equation}\label{supeq24}
    e^xx^{-\frac{k+k'}{2}}F_{d;k,k'}(x)=\sum_{n=0}^Nc_n(d)x^{nd}+\sum_{n>N}c_n(d)x^{nd}.
\end{equation}
By logarithmic convexity of the gamma function it follows that for very large values of $n$
    \begin{equation}\label{supeq25}
    \sqrt{(nd+k)!(nd+k')!}\simeq\Gamma\left(nd+\frac{k+k'}{2}+1\right),
    \end{equation}
\cref{supeq24} can then be rewritten as
    \begin{align}
    &e^xx^{-\frac{k+k'}{2}}F_{d;k,k'}(x)\simeq E_{d,\frac{k+k'}{2}+1}(x^d)\nonumber\\
    &+\sum_{n=0}^Nx^{nd}\left(\frac{1}{\sqrt{(nd+k)!(nd+k')!}}-\frac{1}{\Gamma\left(nd+\frac{k+k'}{2}+1\right)}\right)\label{supeq26}
    \end{align}
where 
\begin{equation}\label{supeq27}
    E_{\alpha,\beta}(x)=\sum_{n=0}^\infty\frac{x^n}{\Gamma(an+b)}
\end{equation}
is the two parameter Mittag-Leffler function. Multiplying \cref{supeq26} by $e^{-x}x^{\frac{k+k'}{2}}$, taking the limit $x\to\infty$ and employing the asymptotic expansion of $E_{\alpha,\beta}(x)$ \cite{Wong2002}
    \begin{equation}\label{supeq28}
    E_{d,\frac{k+k'}{2}+1}(x^d)=\frac{1}{d}x^{-\frac{k+k'}{2}}e^x+\frac{1}{d}\sum_{s\geq 1}X_s^{-\frac{k+k'}{2}}e^{X_s}+O(x^{-d})
    \end{equation}
completes the proof.
\end{proof}
\begin{lem}
Let
\begin{equation}\label{supeq30}
    G_{d;k,k'}(x)=\sum_{n=0}^{\infty}\left(\frac{x}{4}\right)^{nd+\frac{k+k'}{2}}\frac{\sqrt{(2nd+2k)!(2nd+2k')!}}{(nd+k)!(nd+k')!}
\end{equation}
with $0\leq k,k'\leq d-1$, then
\begin{equation}\label{supeq29}
\lim_{x\to1}G_{d;k,k'}(x)\sqrt{1-x}=\frac{1}{d}.
\end{equation}
\end{lem}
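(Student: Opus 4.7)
My plan is to extract the dominant singular behaviour of $G_{d;k,k'}(x)$ near $x=1$, establishing that
\[G_{d;k,k'}(x)\sim \frac{1}{d\sqrt{1-x}}\quad\text{as }x\to 1^-,\]
from which the stated limit follows immediately upon multiplying by $\sqrt{1-x}$ and using $x^{(k+k')/2}\to 1$.

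First I would rewrite the coefficient in each term as a product of central binomial coefficients,
\[\frac{\sqrt{(2nd+2k)!(2nd+2k')!}}{(nd+k)!(nd+k')!}=\sqrt{\binom{2(nd+k)}{nd+k}\binom{2(nd+k')}{nd+k'}},\]
and apply the Stirling asymptotic $\binom{2m}{m}\sim 4^m/\sqrt{\pi m}$ to each factor. For $n\geq 1$ the product becomes $2^{2nd+k+k'}/\bigl(\sqrt{\pi}\,[(nd+k)(nd+k')]^{1/4}\bigr)$, which for large $n$ reduces to $2^{2nd+k+k'}/\sqrt{\pi n d}$ since $(nd+k)(nd+k')\sim (nd)^2$. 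The explicit factor $(x/4)^{nd+(k+k')/2}$ precisely cancels the $2^{2nd+k+k'}$, so that the $n$-th term behaves asymptotically like $x^{nd+(k+k')/2}/\sqrt{\pi n d}$.

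Next I would split the series into a finite head (bounded as $x\to 1$) and an infinite tail approximated by
\[\frac{x^{(k+k')/2}}{\sqrt{\pi d}}\sum_{n=1}^\infty \frac{x^{nd}}{\sqrt{n}}.\]
The key analytic ingredient is the Abelian-type theorem for the polylogarithm,
\[\mathrm{Li}_{1/2}(y)=\sum_{n=1}^\infty \frac{y^n}{\sqrt{n}}\sim \Gamma(1/2)(1-y)^{-1/2}=\sqrt{\frac{\pi}{1-y}},\quad y\to 1^-,\]
which follows from a Laplace-type integral comparison. Setting $y=x^d$ and using $1-x^d\sim d(1-x)$, the tail grows like $\sqrt{\pi/(d(1-x))}$, producing
\[G_{d;k,k'}(x)\sim \frac{x^{(k+k')/2}}{\sqrt{\pi d}}\cdot\sqrt{\frac{\pi}{d(1-x)}}=\frac{x^{(k+k')/2}}{d\sqrt{1-x}},\]
and the desired limit follows.

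The main obstacle is ensuring that the Stirling error terms and the finite head do not contaminate the limit. This can be handled by writing $\binom{2m}{m}=(4^m/\sqrt{\pi m})(1+O(1/m))$ uniformly in $m$, so that the resulting correction series has general term bounded by $x^{nd}/n^{3/2}$ and therefore converges absolutely even at $x=1$. This correction remains uniformly bounded on $(0,1]$, and after multiplication by $\sqrt{1-x}$ vanishes in the limit, leaving only the principal $1/d$ contribution. An essentially identical argument (or direct quotation of the corresponding Abelian behaviour with the exponential replacing the polylogarithm, as invoked for $F_{d;k,k'}$) completes the proof.
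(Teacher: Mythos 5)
Your proof is correct and follows essentially the same route as the paper: apply Stirling's approximation to the coefficients, split off a harmless head/correction series that stays bounded (and hence dies under the factor $\sqrt{1-x}$), and identify the $(1-x^d)^{-1/2}$ singularity of the tail before using $1-x^d\sim d(1-x)$. The only cosmetic difference is that you compare the tail to $\mathrm{Li}_{1/2}(x^d)\sim\sqrt{\pi/(1-x^d)}$, whereas the paper replaces $1/\sqrt{\pi n}$ by the central binomial coefficient $\binom{2n}{n}/4^n$ and uses the exact generating function $\sum_n\binom{2n}{n}(x^d/4)^n=(1-x^d)^{-1/2}$; your explicit uniform $O(1/m)$ error control is, if anything, slightly more careful than the paper's finite-head argument.
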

\begin{proof}
Let $N$ be a sufficiently large integer and 
\begin{equation}\label{supeq31}
    c_n(d)=\frac{1}{2^{2nd+k+k'}}\frac{\sqrt{(2nd+2k)!(2nd+2k')!}}{(nd+k)!(nd+k')!}
\end{equation}
then
\begin{equation}\label{supeq32}
    x^{-\frac{k+k'}{2}}G_{d;k,k'}(x)=\sum_{n=0}^{N}c_n(d)x^{nd}+\sum_{n>N}^{\infty}c_n(d)x^{nd}.
\end{equation}
Employing Stirling's approximation 
\begin{equation}\label{supeq33}
c_n(d)\simeq\frac{1}{\sqrt{n\pi d}}\simeq\frac{(2n)!}{2^{2n}(n!)^2}\frac{1}{\sqrt{d}},\quad n>>1
\end{equation}
and \cref{supeq32} can be rewritten as
\begin{align}
    &x^{-\frac{k+k'}{2}}G_{d,k,k'}(x)\nonumber\\
    &\simeq \frac{1}{\sqrt{d}}\sum_{n=0}^{\infty}x^{nd}\frac{(2n)!}{2^{2n}(n!)^2}\nonumber+\sum_{n=0}^Nx^{nd}\left(c_n(d)-\frac{1}{\sqrt{d}}\frac{(2n)!}{2^{2n}(n!)^2}\right)\nonumber\\
    &=\frac{1}{\sqrt{d(1-x^d)}}+\sum_{n=0}^Nx^{nd}\left(c_n(d)-\frac{1}{\sqrt{d}}\frac{(2n)!}{2^{2n}(n!)^2}\right)
\end{align}
Multiplying each side by $x^{\frac{k+k'}{2}}\sqrt{1-x}$ and taking the limit $x\to1$ completes the proof.
\end{proof}
\end{document}